\title{Algebraic Foundations of Proof Refinement}
\date{\today}
\author{
  \IEEEauthorblockN{Jonathan Sterling}
  \IEEEauthorblockA{Carnegie Mellon University}
  \and
  \IEEEauthorblockN{Robert Harper}
  \IEEEauthorblockA{Carnegie Mellon University}
}
\begin{document}

\maketitle

\begin{abstract}

  We contribute a general apparatus for \emph{dependent} tactic-based proof
  refinement in the \LCF{} tradition, in which the statements of subgoals may
  express a dependency on the proofs of other subgoals; this form of dependency
  is extremely useful and can serve as an \emph{algorithmic} alternative to
  extensions of \LCF{} based on non-local instantiation of schematic variables.
  Additionally, we introduce a novel behavioral distinction between
  refinement rules and tactics based on naturality.  Our
  framework, called \DependentLCF{}, is already deployed in the nascent
  \RedPRL{} proof assistant for computational cubical type theory.

\end{abstract}

\section{Introduction}

Interactive proof assistants are at their most basic level organized
around some form of proof refinement apparatus, which defines
the valid transitions between partial constructions, and induces a
notion of proof tactic. The proof refinement tradition begins
with Milner~et~al.'s \emph{Logic for Computable
  Functions}~\cite{milner:1972,gordon-milner-wadsworth:1979} and was
further developed in Cambridge LCF, HOL and
Isabelle~\cite{gordon:2000}, as well as the Nuprl
family~\cite{constable:1986,hickey:2001,redprl:2016};
tactic-based proof refinement is also used in the highly successful
Coq proof assistant~\cite{coq:reference-manual} as well as the new
Lean theorem prover~\cite{de-moura-kong-avigad-van-doorn-raumer:2015}.

\begin{notation}
  Throughout this paper, we employ a notational scheme where the
  active elements of mathematical statements or judgments are colored
  according to their \emph{mode}, i.e.\ whether they represent inputs
  to a mathematical construction or outputs. Terms in input-mode are
  colored with $\IMode{\textbf{blue}}$, whereas terms in output-mode are
  colored with $\OMode{\textbf{maroon}}$.
\end{notation}

\subsection{Proof refinement and evidence semantics}

At the heart of \LCF-style proof refinement is the coding of the
inference rules of a formal logic into partial functions which take a
conclusion and return a collection of premises (subgoals); this is
called backward inference. Often, this collection of subgoals
is equipped with a validation, a function that converts
evidence of the premises into evidence of the conclusion (which is
called forward inference).

An elementary example of the \LCF{} coding of inference rules can be
found in the right rule for conjunction in an intuitionistic sequent
calculus:
\begin{gather*}
  \IBox{
    \begin{array}{l}
      \infer[\mathfrak{\land_R}]{
      \IMode{\Gamma}\vdash\IMode{P\land Q}
      }{
        \deduce[\OMode{\mathcal{D}}]{\IMode{\Gamma}\vdash\IMode{P}}{}
        &
        \deduce[\OMode{\mathcal{E}}]{\IMode{\Gamma}\vdash\IMode{Q}}{}
      }
    \end{array}
  }
  \\
  \triangledown
  \\
  \OBox{
    \begin{RuleDefn}
      \RuleCase{\Gamma\vdash A \land B}{
        \left\langle
          \begin{array}{l}
            [\Gamma\vdash A, \Gamma\vdash B],\\
            \lambda[\mathcal{D},\mathcal{E}].\ \mathfrak{\land_R}(\mathcal{D}, \mathcal{E})
          \end{array}
        \right\rangle%
      }
      \\
      \RuleCase{\_}{\mathbf{raise}\ \mathsf{Fail}}
    \end{RuleDefn}
  }
\end{gather*}

In the example above, the validation produces exactly a proof of
$\IMode{\Gamma}\vdash\IMode{A\land B}$ in the intuitionistic sequent
calculus, but in general it is not necessary for the meaning of the
validations to correspond exactly with formal proofs in the logic
being modeled.

In both Edinburgh~LCF and Cambridge~LCF, the implementation of a proof
refinement logic was split between a trusted kernel which implemented
forward inference, and a separate module which implemented
backward inference (refinement) using the forward inference rules as
validations~\cite{paulson:1987}. Under this paradigm, every inference
rule must be implemented twice, once in each direction.

This approach of duplicating formal rules in forward and backward
inference is a \emph{particular} characteristic of the \LCF{}
revolution as it actually occurred, rather than a \emph{universal}
principle which is to be applied dogmatically in every application of
the \LCF{} methodology. On the contrary, it is possible to view a
collection of backward inference (refinement) rules as definitive in
its own right, independent of any forward inference characterization
of a logic; this insight, first achieved in the Nuprl proof
assistant~\cite{constable:1986}, enables a sharp distinction between
formal proof and evidence (the former corresponding to derivations and
the latter corresponding to realizers).

For instance, in the Nuprl family of proof assistants, the refinement
logic is a formal sequent calculus for a variant of Martin-L\"of's
type theory (see~\cite{martin-lof:1979}), but the validations produce
programs in an untyped language of realizers with computational
effects~\cite{rahli-bickford:2016}. Validations, rather than
duplicating the existing refinement rules in forward inference, form
the \enquote{program extraction} mechanism in Nuprl; this deep
integration of extraction into the refinement logic lies in stark
contrast with the purely extra-theoretic treatments of program
extraction in other proof assistants, such as Coq and Agda.

The ability for the notion of evidence to vary independently
from the notion of formal proof is a major strength of the
\LCF{} design: the latter is defined by the assignment of premises to
conclusions in trusted refinement rules, whereas the former is defined
in the validations of these rules.

Constable first made precise this idea under the name of
\emph{computational evidence
  semantics}~\cite{constable:1985,constable:2012}. In practice, this
technique of separating the proof theory from the evidence semantics
can be used to induce (sub)structural and behavioral invariants in
programs from arbitrary languages as they already exist,
regardless of whether these languages possess a sufficiently
proof-theoretic character (embodied, for instance, in the
much acclaimed decidable typing property).

This basic asymmetry between proof and computational evidence in
refinement logics corresponds closely with the origin of the sequent
calculus, which was designed as a theory of derivability for natural
deduction. More generally, it reflects the distinction between the
demonstration of a judgment and the construction it
effects~\cite{martin-lof:1987,prawitz:2012}.

\subsection{Attacking the dependency problem}\label{sec:barbarism}

An apparently essential part of the \LCF{} apparatus is that each
subgoal may be stated independently of evidence for the others; this
characteristic, originally identified as part of the
\emph{constructible subgoals} property by Harper in his
dissertation~\cite{harper:1985}, allows proof refinement to proceed in
parallel and in any order.

This restriction, however, raises difficulties in the definition of a
refinement logic for dependent type theory, where the statement of one
subgoal may depend on the evidence for another subgoal (a state of
affairs induced by families of propositions which are indexed in
proofs other another proposition). The most salient example of this
problem is given by the introduction rule for the dependent sum of a
family of types~\cite[p.~35]{harper:1985}. First, consider the
standard type membership rules, which pose no problem:
\[
  \infer[\mathfrak{\times^=_I}]{
    \IMode{\langle M, N\rangle} \in \IMode{(x : A)\times B[x]}
  }{
    \IMode{M}\in\IMode{A}
    &
    \IMode{N}\in\IMode{B[M]}
  }
\]

The premises can be stated purely on the basis of the conclusion,
because $M$ appears in the statement of the conclusion. However, if we
try to convert this to a refinement rule, in which we do not
have to specify the exact member $\langle M,N\rangle$ in advance, we
immediately run into trouble:
\[
  \infer[\mathfrak{\times_R}]{
    \deduce[\OMode{\langle M,N\rangle}]{\IMode{(x:A)\times B[x]}\ \mathit{true}}{}
  }{
    \deduce[\OMode{M}]{\IMode{A}\ \mathit{true}}{}
    &
    \deduce[\OMode{N}]{\IMode{B[M]}\ \mathit{true}}{}
  }
\]

Suspending for the moment one's suspicions about the above notation,
the fundamental problem is that this inference rule cannot be
translated into an \LCF{} rule, because the subgoal
$\IMode{B[M]}\ \mathit{true}$ cannot even be stated
until it is known what $M$ is, i.e.\ until we have somehow run the
validation for the completed proof of the first subgoal.

\subsubsection{Ignore the problem!}

The resolution adopted by the first several members of the Nuprl
family was to defer the solution of this problem to a later year,
thereby requiring that the user solve the first subgoal in advance of
applying the introduction rule. This amounted to defining a countable
family of rules $\mathfrak{\times_R}\{M\}$, with the term $M$ a
parameter:
\[
  \infer[\mathfrak{\times_R}\{M\}]{
    \deduce[\OMode{\langle M,N\rangle}]{\IMode{(x:A)\times B[x]}\ \mathit{true}}{}
  }{
    \deduce[\OMode{*}]{\IMode{M}\in\IMode{A}}{}
    &
    \deduce[\OMode{N}]{\IMode{B[M]}\ \mathit{true}}{}
  }
\]

The subgoals of the above rule are independent of each other, and it
can therefore easily be coded as an \LCF{} rule. However, this has
come at great cost, because the user of the proof assistant is no
longer able to produce the proof of $A$ using refinement, and must
choose some witness before proceeding. This is disruptive to the
interactive and incremental character of refinement proof, and we
believe that the time is ripe to revisit this question.

\subsubsection{Non-local unification}
The most commonly adopted solution is to introduce a notion of
existential variable and solve these non-locally in a proof via
unification or spontaneous resolution. The basic idea is that you proceed with the proof of the
second premise without yet knowing what $M$ is, and then at a certain
point, it will hopefully become obvious what it must be, on the
basis of some goal in the subtree of this premise:
\[
  \infer{
    \deduce[\OMode{\langle ?\mathfrak{m},N\rangle}]{\IMode{(x:A)\times B[x]}\ \mathit{true}}{}
  }{
    \deduce[\OMode{*}]{\IMode{?\mathfrak{m}}\in\IMode{A}}{}
    &
    \deduce[\OMode{N}]{\IMode{B[?\mathfrak{m}]}\ \mathit{true}}{}
  }
\]

An essential characteristic of this approach is that when it is known
what $?\mathfrak{m}$ must be, this knowledge propagates immediately to
all nodes in the proof that mention it. This is a fundamentally
imperative operation, and makes it very difficult to reason
about proof scripts; moreover, it complicates the development of a
clean and elegant semantics (denotational or otherwise) for proof
refinement. At the very least, a fully formal presentation of the
transition rules for such a system will be very difficult, both to
state and to understand.

Another potential problem with using unification to solve goals is
that one must be very cautious about how it is applied: using
unification uncritically in the refinement apparatus can change
the character of the object logic. One of the more destructive
examples of this happening in practice was when overzealous use of
unification in the Agda proof assistant~\cite{norell:2009} led to the
injectivity of type constructors being derivable, whence the principle
of the excluded middle could be refuted.

Uncritical use of unification in type theories like that of Nuprl
might even lead to inconsistency, since greater care must be taken to
negotiate the intricacies of subtyping and functionality which arise
under the semantic typing discipline. As Cockx, Devriese and Piessens
point out in their recent refit of Agda's unification theory,
unification must be integrated tightly with the object logic in order
to ensure soundness~\cite{cockx-devriese-piessens:2016}.

\subsubsection{Our solution: adopt a dependent refinement discipline}

Taking a step back, there are two more things that should make us
suspicious of the use of existential variables and unification in the
above:

\begin{enumerate}

\item We \emph{still} do not exhibit $A$ using refinement rules, but rather
  simply hope that at some point, we shall happen upon (by chance) a
  suitable witness for which we can check membership. In many cases,
  it will be possible to inhabit $B[?\mathfrak{m}]$ regardless of
  whether $?\mathfrak{m}$ has any referent, which will leave us in
  much the same position as we were in Nuprl: we must cook up some
  arbitrary $A$ independently without any help from the refinement rules.

\item It is a basic law of type-theoretic praxis that whatever structure exists
  at the propositional level should mirror a form of construction that already
  exists at the judgmental level, adding to it only those characteristics
  which distinguish truth from the evidence of an arbitrary judgment (e.g.\
  functionality, local character, fibrancy, etc.). In this case, the use of
  existential variables and unification seems to come out of nowhere, whereas
  we would expect a dependent sum at the type level to be defined in terms of
  some notion of dependent sum at the judgmental level.

\end{enumerate}

With the above in mind, we are led to try and revise the old \LCF{}
apparatus to support a \emph{dependent} refinement discipline,
relaxing the constructible subgoals property in such a way as to admit
a coding for the rule given at the beginning of this section,
$\mathfrak{\times_R}$.

\section{Survey of Related Works}

\subsection{Semantics of proof refinement in Nuprl}

The most detailed denotational semantics for tactic-based proof
refinement that we are aware of is contained in Todd Knoblock's PhD
thesis~\cite{knoblock:1988} vis-\`a-vis the Nuprl refinement
logic. Knoblock's purpose was to endow Nuprl with a tower of
metalogics, each reflecting the contents of the previous ones,
enabling internal reasoning about the proof refinement process itself;
this involved specifying semantic domains for (reflected) Nuprl
judgments, proofs and proof tactics in the Nuprl logic.

A detailed taxonomy of different forms of proof tactic was considered,
including \emph{search tactics} (analogous to \emph{valid tactics} in
\LCF{}), \emph{partial tactics} (tactics whose domain of applicability is
circumscribed a priori), and \emph{complete tactics} (partial tactics
which produce no subgoals when applied in their domain of
applicability).

Spiritually, our apparatus is most closely related to Knoblock's
contributions, in light of the purely semantical and denotational
approach which he pursued. The fact that Knoblock's semantic universe
for proof refinement was the Nuprl logic itself enabled users of the
system to prove internally the general effectiveness of a tactic on
some class of Nuprl sequents once and for all; then, such a tactic
could be applied without needing to be executed.

\subsection{Isabelle as a meta-logical framework}

Isabelle, a descendent of Cambridge LCF, is widely considered the gold
standard in tactic-based proof refinement today; at its core, it is
based on a version of Intuitionistic higher-order logic called
Isabelle/Pure, which serves as the logical framework for all other
Isabelle theories, including the famous Isabelle/HOL theory for
classical higher-order logic.

In Isabelle, tactics generally operate on a full proof state (as
opposed to the \enquote{local} style pioneered in LCF); a tactic is a
partial function from proof states to lazy sequences of proof
states. Note that the sequences here are used to accomodate
sophisticated, possibly non-deterministic search schemata. In contrast
to other members of the LCF family, the notion of validation
has been completely eschewed; this has relieved Isabelle of the need
to duplicate rules in both forward and backward inference, and
simplifies the correctness conditions for a tactic.

Isabelle does not address the issue of dependent refinement, instead
relying heavily on instantiation of schematic variables by
higher-order unification. Because HOL is Isabelle's main theory, this
is perhaps not so bad a state of affairs, since the most compelling
uses of dependent refinement arise in proof systems for dependent type
theory, in which a domain of discourse is identified with the
inhabitants of a type or proposition. In non-type-theoretic approaches
to logic, a proposition is proved using inference rules, whereas an
element of the domain of discourse is constructed according to a
grammar.

With that said, instantiation of schematic variables at higher type in
Isabelle is not always best done by unification alone, and often
requires manual guiding. A pathological case is the instantiation of
predicates shaped like $?\mathfrak{m}[?\mathfrak{n}]$, where it is
often difficult to proceed without further input~\cite[\S
4.2.2]{wenzel:2016}.

Under a dependent refinement discipline, however, the instantiation of
schematic variables ranging over higher predicates can be pursued with
the rules of the logic as a guide, in the same way that all other
objects under consideration are constructed: in the validations of
backward inference rules.

This insight, which is immanent in the higher-order
propositions-as-types principle, is especially well-adapted for use in
implementations of Martin-L\"of-style dependent type theory, where it
is often the case that the logic can guide the instantiation of a
predicate variable in ways that pure unification cannot. This is
essentially the difference between a direct and algorithmic
treatment of synthetic judgment, and its declarative simulation
as analytic judgment~\cite{martin-lof:1994}.

We stress that higher-order unification is an extremely useful
technique, but it appears to complement rather than obviate a
proper treatment of dependent refinement. Though it is not the topic
of this paper, we believe that a combination of the two techniques
would prove very fruitful.

\subsection{OLEG \& Epigram: tactics as admissible rules}

Emanating from Conor McBride's dissertation work is a unique approach
to proof elaboration which has been put to great use in several proof
assistants for dependent type theory including OLEG, Epigram and
Idris~\cite{mcbride:1999,
  mcbride:2004,brady:2005,brady:2013,christiansen-brady:2016}. A more
accessible introduction to McBride's methodology is given
in~\cite{gundry-mcbride-mckinna:2010}.

Like ours, McBride's approach rests upon the specification of
judgments in context which are stable under context substitution;
crucially, McBride's apparatus was the first treatment of proof
refinement to definitively rule out invalid scoping of schematic
variables, a problem which plagued early implementations of
\enquote{holes}. In particular, McBride's framework is well suited to
the development of type checkers, elaborators and type inferencers for
formal type theories and programming languages.

One of the ways in which our contribution differs from McBride's
system is that we treat rules of inference algebraically, i.e.\ as
first-class entities in a semantic domain together with some
operations; then, following the \LCF{} tradition, we develop a
menagerie of combinators (tacticals) by which we can combine these
rules into composite proofs. In this sense, our development is a
treatment of derivability relative to a trusted basis of
(backward) inference rules.

McBride's approach is, on the contrary, to take the trusted basis of
inference rules as given ambiently rather than algebraically,
and then to develop a theory of proof tactic based on
admissibility with respect to this basis theory.

Both approaches have been shown to be useful in implementations of type theory,
and we hope to better understand through practice the various trade-offs which
they induce.

\subsection{Typed tactics with Mtac}

The idea of capturing proof tactics using a monad is put to use in the Mtac
language, which is an extension to Coq which supports typed tactic
programming~\cite{ziliani-dreyer-krishnaswami-nanevski-vafeiadis:2013}.

\subsection{Dependent subgoals in Coq 8.5}

In his PhD thesis, Arnaud Spiwack addressed the lack of dependent
refinement in the Coq proof assistant by redesigning its tactic
apparatus~\cite{spiwack:2011}; Spiwack's efforts culminated in the
release of Coq~8.5 in early 2016, which incorporates his new design,
including support for \enquote{dependent subgoals} (which we call
\emph{dependent refinement}) and tactics that operate simultaneously
on multiple goals (which we call \enquote{multitactics}).

Spiwack's work centered around a new formulation of \LCF-style tactics
which was powerful enough to support a number of useful features,
including backtracking and subgoals expressing dependencies on other
subgoals; the latter is effected through an imperative notion of
existential variable (in contrast to the purely functional semantics
for subgoal dependency that we give in this paper).

\subsection{Our contributions}

We take a very positive view of Spiwack's contributions in this area,
especially in light of the successful concrete realization of his
ideas in the Coq proof assistant. As far as engineering is concerned,
we consider Spiwack to have definitively resolved the matter of
dependent refinement for Coq.

At the same time, we believe that there is room for a mathematical
treatment of dependent refinement which abstracts from the often
complicated details of real-world implementations, and is completely
decoupled from specific characteristics of a particular logic or proof
assistant; our experience suggests that the development of a semantics
for proof refinement along these lines can also lead to a cleaner,
more reusable concrete realization.

Our contribution is a precise, compositional and purely functional
semantics for dependent proof refinement which is also immediately
suitable for implementation; we have also introduced a novel
behavioral distinction between refinement rules and tactics based on
naturality.
Our framework is called \DependentLCF{}, and our \StandardML{}
implementation has already been used to great effect in the new
\RedPRL{} proof assistant for computational cubical type
theory~\cite{redprl:2016,angiuli-harper-wilson-popl:2016}.

\section{Preliminaries}
\subsection{Lawvere Theories}

We wish to study the algebraic structure of dependent proof refinement
for a fixed language of constructions or evidence.
To abstract away from the bureaucratic details of a particular
encoding, we will work relative to some multi-sorted Lawvere theory
$\Th$, a strictly associative cartesian category whose objects can be
viewed as sorts or contexts (finite products of sorts) and whose
morphisms may be viewed as terms or substitutions.

\begin{definition}[Lawvere theory]
  To define the notion of a multi-sorted Lawvere theory, we fix a set
  of sorts $\Sorts$; let $\Ctxs$ be the free strict associative
  cartesian category on $\Sorts$. Then, an $\Sorts$-sorted Lawvere
  theory is a strictly associative cartesian category $\Th$ equipped
  with a cartesian functor $\Of{k}{\Ctxs\to\Th}$ which is essentially
  surjective.
\end{definition}

We will write $\MultiOf{\Gamma,\Delta,\Xi}{\Th}$ for the objects of
$\Th$ and
$\MultiOf{\mathfrak{a},\mathfrak{b},\mathfrak{c}}{\ThHom{\Gamma}{\Delta}}$
for its morphisms. We will freely interchange \enquote{context} and
\enquote{sort} (and \enquote{substitution} and \enquote{term}) when
one is more clear than the other. We will sometimes write
$\IMode{\Gamma,x:\Delta}$ for the context
$\OMode{\Gamma\times\Delta}$, and then use $\IMode{x}$ elsewhere as
the canonical projection
$\OBox{\Of{\mathsf{p}}{\ThHom{\Gamma\times\Delta}{\Delta}}}$.

\begin{remark}[Second-order theories]
  In the simplest case, a Lawvere theory $\Th$ forms the category of
  contexts and substitutions for some first-order
  language. However, as Fiore and Mahmoud have shown, this machinery
  scales up perfectly well to the case of second-order theories
  (theories with binding)~\cite{fiore-mahmoud:2010}.

  In that case, the objects are contexts of second-order variables
  associated to valences (a sort together with a list of sorts
  of bound variables), and the maps are second-order substitutions;
  when the output of a map is a single valence $\vec{\sigma}.\tau$,
  the map can be read as a term binder.

  One of our reasons for specifying no more about $\Th$ than we have
  done so far is to ensure that our apparatus generalizes well to the
  case of second-order syntax, which is what is necessary in nearly
  every concrete application of this work.
\end{remark}

In what follows, we will often refer to variables as \enquote{schematic
  variables} in order to emphasize that these are variables which
range over evidence in the proof refinement apparatus, as opposed to
variables from the object logic. In the first-order case, all
variables are schematic variables; in the second-order case, the
second-order variables (called \enquote{metavariables} by Fiore~et~al)
are the schematic variables, and the object variables are essentially
invisible to our development.

\subsection{Questions Concerning a Semantic Universe}

Our main task is to define a semantic universe in which we can build
objects indexed in $\Th$, which respect substitutions of
schematic variables. Some kind of presheaf category, then, seems to be what
we want---and then proof refinement rules should be natural
transformations in this presheaf category.

The question of which indexing category to choose is a subtle one; in
order to construct our proof states monad, we will prefer to work with
something like presheaves over $\Th$, i.e.\ \emph{variable sets} which
implement all substitutions. However, most interesting refinement
rules that we wish to define will not commute with substitutions in
all cases, which is the content of naturality. This corresponds to the
fact that a refinement rule may fail to be applied if there is a
schematic variable in a certain position, but may succeed if that
variable is substituted for by some suitable term.

Essentially the same problem arises in the context of coalgebraic
logic programming~\cite{bonchi-zanasi:2013,komendantskaya-power:2016};
several methods have been developed to deal with this behavior,
including switching to an order-enriched semantic universe and using
lax natural transformations for the operational semantics; another
approach, called \enquote{saturation}, involves trivializing
naturality by treating $\Th$ as a discrete category $\Dom\Th$ (by
taking the free category on the set of objects of $\Th$), and then
saturating constructions along the adjunction
$\IsAdjoint{\IBox{\Of{i^*}{\Psh{\Th}\to\Psh{\Dom{\Th}}}}}{\IBox{\Of{i_*}{\Psh{\Dom\Th}\to\Psh{\Th}}}}$.

In the context of general dependent proof refinement, the lax
semantics are the most convenient; we will apply a variation on this
approach here, which also incorporates discrete reindexing for
interpreting tactics.

\begin{notation}
  Following the notation of the French
  school~\cite[p.~25]{maclane-moerdijk:1992}, we write
  $\IMode{\Psh{\mathbb{X}}}$ for the category of presheaves
  $\OMode{\SET^{\OpCat{\mathbb{X}}}}$ on a category $\mathbb{X}$.
\end{notation}

\begin{notation}
  We will write $\Of{\CtxsPsh}{\Psh{\Th}}$ for the constant
  presheaf of $\Th$-objects, $\Define{\CtxsPsh(\Gamma)}{\mathsf{ob}(\Th)}$.
  We may also write $\IBox{\InFib{\Gamma}{X}{F}}$ to mean
  $\OBox{\Member{X}{F(\Gamma)}}$ when $\Of{F}{\Psh{\Th}}$.
\end{notation}

We will frequently have need for a presheaf of terms of an appropriate
sort relative to a particular context,
$\Of{(\Gamma\vdash\Delta)}{\Psh{\Th}}$. This we can define informally
as follows:
\[
  \infer={
    \InFib{\Xi}{\mathfrak{a}}{(\Gamma\vdash\Delta)}
  }{
    \IMode{\Xi,\Gamma}\vdash\IMode{\mathfrak{a}}:\IMode{\Delta}
  }
\]

Formally, this is the exponential
$\OMode{\Yoneda{\Delta}^{\Yoneda{\Gamma}}}$ with
$\Of{\Yoneda{-}}{\Th\to\Psh{\Th}}$ the Yoneda embedding; this
perspective is developed in Appendix~\ref{appendix:formal-defs}.

\subsection{Presheaves and lax natural transformations}

Let $\POS$ be the order-enriched category of partially ordered sets; arrows are
endowed with an order by pointwise approximation:
$\IsApprox{f}{g}$ iff $\IBox{\IsApprox{x}{y}}\Rightarrow
\IBox{\IsApprox{f(x)}{g(y)}}$.

\begin{definition}[Presheaves and lax natural transformations]
  A \emph{$\POS$-valued presheaf} on $\mathbb{C}$ is a functor from
  $\OpCat{\mathbb{C}}$ into $\POS$.

  A \emph{lax natural transformation} $\IMode{\phi}$ between two such
  presheaves $\IMode{P},\IMode{Q}$ is a collection of components whose naturality
  square commutes up to approximation in the following sense:
  \[
    \begin{tikzcd}[sep=large]
      \IMode{P(d)}
      \arrow[r, "\IMode{P(f)}"]
      \arrow[d, swap, "\IMode{\phi_d}"]
      &
      \IMode{P(c)}
      \arrow[d, "\IMode{\phi_c}"]
      \\
      \IMode{Q(d)}
      \arrow[r, swap, "\IMode{Q(f)}"]
      \arrow[ur, phantom, "\preccurlyeq"]
      &
      \IMode{Q(c)}
    \end{tikzcd}
  \]
  In other words, we need have only that
  $\IsApprox{Q(f)\circ\phi_d}{\phi_c\circ P(f)}$ in the
  above diagram.
\end{definition}

\begin{notation}
  We will write $\LaxPsh{\mathbb{C}}$ for the category of
  $\POS$-valued presheaves and lax natural transformations on
  $\mathbb{C}$. \emph{Note that the presheaves themselves are strict;
  only natural transformations between the presheaves are lax.} A
  $\SET$-valued presheaf $\Of{P}{\Psh{\mathbb{C}}}$ can be silently
  regarded as a $\POS$-valued presheaf $\Of{P}{\LaxPsh{\mathbb{C}}}$
  by endowing each fiber with the discrete order.
\end{notation}

\section{A Framework for Proof Refinement}

We will now proceed to develop the \DependentLCF{} theory by
specifying the semantic objects under consideration, namely
\emph{judgment structures}, \emph{proof states}, \emph{refinement
  rules}, and \emph{proof tactics}.

\subsection{Judgment Structures}

A \emph{judgment} is an intention toward a particular form of
construction; that is, a form of judgment is declared by specifying
the $\Th$-object (that is, sort or context) which classifies what it
intends to construct. It is suggestive to consider this object the
\emph{output} of a judgment, in the sense that if the judgment is
derived, it will emit a substitution of the appropriate sort which can
be used elsewhere.

For example, in a Martin-L\"of-style treatment of intuitionistic logic
(see~\cite{martin-lof:1996}), the judgment $\IBox{\IsTrue{P}}$
constructs objects of sort $\OMode{\SortExp}$, where $\SortExp$ is the
sort of expressions in a basic programming language.

In the dependent proof refinement discipline, the statement of a
judgment may be interrupted by a schematic variable (e.g.\ the
judgment $\IsTrue{P(x)}$), which ranges over the evidence of some
other judgment, and may be substituted for by a term of the
appropriate sort.
This behavior captures the ubiquitous case of existential
instantiation, where we have a predicate applied to a schematic variable
which stands for an element of the domain of discourse, to be refined
in the course of verifying another judgment.

To make this precise, we can define a notion of \enquote{judgment structure} as
a collection of \enquote{judgments} which varies over contexts and
substitutions, along with an assignment of sorts to judgments: the sort
assigned to a judgment is then the sort of object that the judgment intends to
construct.

Then, a homomorphism between judgment structures would be a natural
transformation of presheaves which preserves sort assignments.
In Section~\ref{sec:refinement-rules} we will capture \emph{refinement
  rules} as homomorphisms between certain kinds of judgment structures.

\begin{definition}[Judgment structures]\label{def:judgment-structure}
  Formally, we define the category of judgment structures
  $\IMode{\JStr}$ on $\Th$ as the slice category
  $\OMode{\Slice{\LaxPsh{\Th}}{\CtxsPsh}}$; expanding definitions, an
  object $\Of{J}{\JStr}$ is a presheaf $\Of{J}{\LaxPsh{\Th}}$ together
  with an assignment $\Of{\JProj{J}}{J\to \CtxsPsh}$. Then, a map from
  $J_0$ to $J_1$ is a natural transformation that preserves
  $\JProj{}$, in the sense that the following triangle commutes:
  \[
    \begin{tikzcd}
      \IMode{J_0}
        \arrow[rr, "\IMode{\phi}"]
        \arrow[rd, swap, "\IMode{\JProj{J_0}}"]
      &&
      \IMode{J_1}
        \arrow[ld, "\IMode{\JProj{J_1}}"]
      \\
      &
      \IMode{\CtxsPsh}
    \end{tikzcd}
  \]
\end{definition}

It will usually be most clear to define a judgment structure
inductively in syntactic style, by specifying the (meta)judgment
$\IBox{\IsJdg{\Gamma}{X}{J}{\Delta}}$, pronounced \enquote{$\IMode{X}$ is a
  $\IMode{J}$-judgment in context $\IMode{\Gamma}$, constructing a substitution for
  $\OMode{\Delta}$}, which will mean $\OBox{\Member{X}{J(\Gamma)}}$ and
$\OBox{\IsEq{\JProj{J}^\Gamma(X)}{\Delta}}$.

\begin{remark}
  It is easiest to understand this judgment in the special case where
  $\IMode{\Delta}$ is a unary context $\OMode{x:\tau}$; then the judgment
  means that the construction induced by the $J$-judgment $X$ (i.e.\
  it's \enquote{output}) will be a term of sort $\tau$; in general, we
  allow multiple outputs to a judgment, which corresponds to the case
  that $\Delta$ is a context with multiple elements.
\end{remark}

When defining a judgment structure $J$, its order will be specified
using the (meta)judgment $\IBox{\ApproxJdg{\Gamma}{X}{Y}{J}}$
(pronounced \enquote{$X$ approximates $Y$ as a $J$-judgment in context
  $\Gamma$}), which presupposes both
$\IBox{\IsJdg{\Gamma}{X}{J}{\Delta}}$ and
$\IBox{\IsJdg{\Gamma}{Y}{J}{\Delta}}$; unless otherwise specified,
when defining a judgment structure, we usually assume the discrete
order.

In practice, a collection of inference rules in (meta)judgments
$\IsJdg{\Gamma}{X}{J}{\Delta}$ and $\ApproxJdg{\Gamma}{X}{Y}{J}$
should be understood as defining the \emph{least} judgment structure
closed under those rules, unless otherwise stated.

\begin{example}[Cost dynamics of basic arithmetic]
  \label{ex:example-judgment-structure}

  A simple example of a judgment structure can be given by considering
  the cost dynamics for a small language of arithmetic
  expressions~\cite[Ch.\ 7.4]{harper:2016}.

  We will fix two syntactic sorts, $\SortNum$ and $\SortExp$; $\SortNum$
  will be the sort of numerals, and $\SortExp$ will be the sort of
  arithmetic expressions; the Lawvere theory generated from these sorts
  and suitable operations that we define in
  Figure~\ref{fig:example-theory} will be called $\ThArith$. We will
  write $\IMode{\JStrArith}$ for the category of judgment structures
  over $\ThArith$, namely the slice category
  $\OMode{\Slice{\LaxPsh{\ThArith}}{\CtxsPshArith}}$.

  Then, we define a \emph{judgment structure} $\Of{\JArith}{\JStrArith}$
  for our theory by specifying the following forms of judgment:
  \begin{enumerate}

  \item $\IMode{\Eval{e}}$ means that the arithmetic expression
    $\Of{e}{\SortExp}$ can be evaluated; its evidence is the numeral
    value of $e$ and the cost $k$ of evaluating $e$ (i.e.\ the number of
    steps taken).

  \item $\IMode{\Add{m}{n}}$ means that the numerals
    $\MultiOf{m,n}{\SortNum}$ can be added; the evidence of this
    judgment is the numeral which results from their addition.

  \end{enumerate}

  The judgment structure $\JArith$ summarized above is defined
  schematically in Figure~\ref{fig:example-theory}.
\end{example}

\begin{figure*}
  \begin{gather*}
    \infer{
      \IsTmArith{\Gamma}{\overline{n}}{\SortNum}
    }{
      \Member{\overline{n}}{\Nat}
    }
    \qquad
    \infer{
      \IsTmArith{\Gamma}{\Num{n}}{\SortExp}
    }{
      \IsTmArith{\Gamma}{n}{\SortNum}
    }
    \qquad
    \infer{
      \IsTmArith{\Gamma}{e_1 + e_2}{\SortExp}
    }{
      \IsTmArith{\Gamma}{e_1}{\SortExp}
      &
      \IsTmArith{\Gamma}{e_2}{\SortExp}
    }
    \\[6pt]
    \framebox{$\Of{\JArith}{\JStr}$}
    \qquad
    \vcenter{
      \infer{
        \IsJdg{\Gamma}{\Eval{e}}{\JArith}{x_c:\SortNum,x_v:\SortNum}
      }{
        \IsTmArith{\Gamma}{e}{\SortExp}
      }
    }
    \qquad
    \vcenter{
      \infer{
        \IsJdg{\Gamma}{\Add{m}{n}}{\JArith}{\SortNum}
      }{
        \IsTmArith{\Gamma}{m}{\SortNum}
        &
        \IsTmArith{\Gamma}{n}{\SortNum}
      }
    }
  \end{gather*}
  \caption{An example theory $\ThArith$ and judgment structure
    $\JArith$.}
  \label{fig:example-theory}
\end{figure*}

We will use the above as our running example, and after we have
defined a suitable notion of \emph{refinement rule}, we will define
the appropriate rules for the judgment structure $\JArith$.

\subsection{Telescopes and Proof States}

\subsubsection{Telescopes}

An ordered sequence of judgments in which each induces a variable of
the appropriate sort which the rest of the sequence may depend on is
called a \emph{telescope}~\cite{debruijn:1991}. The notion of a
telescope will be the primary aspect in our definition of proof states
later on, where it will specify the collection of judgments which
still need to be proved.

\begin{remark}
  If \enquote{dependent refinement} were replaced with
  \enquote{independent refinement}, then the telescope data-structure
  could be replaced with lists. This design choice characterizes the
  \LCF{} family of proof refinement apparatus.
\end{remark}

\begin{figure*}
  \begin{gather*}
    \framebox{$\Of{\Tele}{\JStr\to\JStr}$}
    \qquad
    \vcenter{
      \infer{
        \IsTele{\Gamma}{\TNil}{J}{\cdot}
      }{
      }
    }
    \qquad
    \vcenter{
      \infer{
        \IsTele{\Gamma}{\TCons{x}{X}{\Psi_x}}{J}{x:\Delta,\Xi}
      }{
        \IsJdg{\Gamma}{X}{J}{\Delta}
        &
        \IsTele{\Gamma,x:\Delta}{\Psi_x}{J}{\Xi}
      }
    }\tag{Telescopes}
    \\[6pt]
    \framebox{$\Of{\State}{\JStr\to\JStr}$}
    \qquad
    \vcenter{
      \infer{
        \IsJdg{\Gamma}{(\MkStateVerbose{\Psi}{\mathfrak{a}}{\Delta})}{\State(J)}{\Delta}
      }{
        \IsTele{\Gamma}{\Psi}{J}{\Xi}
        &
        \InFib{\Gamma}{\mathfrak{a}}{(\Xi\vdash\Delta)}
      }
    }
    \qquad
    \vcenter{
      \infer{
        \IsJdg{\Gamma}{\Fail[\Delta]}{\State(J)}{\Delta}
      }{
      }
    }
    \qquad
    \vcenter{
      \infer{
        \IsJdg{\Gamma}{\bot[\Delta]}{\State(J)}{\Delta}
      }{
      }
    }
    \tag{Proof States}
    \\[6pt]
    \infer{
      \ApproxJdg{\Gamma}{\bot[\Delta]}{S}{\State(J)}
    }{
      \IsJdg{\Gamma}{S}{\State(J)}{\Delta}
    }
    \tag{Approximation}
    \\[6pt]
    \begin{aligned}
      \ADefine{\MkState{\Psi}{\mathfrak{a}}}{\MkStateVerbose{\Psi}{\mathfrak{a}}{\Delta}}
      \\
      \ADefine{\Fail}{\Fail[\Delta]}
      \\
      \ADefine{\bot}{\bot[\Delta]}
    \end{aligned}
    \tag{Notations}
  \end{gather*}
  \caption{Definitions of telescopes and proof states. Except where otherwise specified, the discrete order is assumed in all definitions above.}
  \label{fig:telescopes-and-states}
\end{figure*}

We intend telescopes to themselves be an endofunctor on judgment
structures, analogous to an iterated dependent sum; we will define the
judgment structure endofunctor $\Of{\Tele}{\JStr\to\JStr}$
inductively.

For a judgment structure $\Of{J}{\JStr}$, a $J$-telescope is either
$\TNil$ (the empty telescope), or $\TCons{x}{X}{\Psi}$ with $X$ a
$J$-judgment and $\Psi$ a $J$-telescope with $x$ bound; the sort that
is synthesized by a telescope is the product of the sorts synthesized
by its constituent judgments. The precise rules for forming telescopes
are given in Figure~\ref{fig:telescopes-and-states}.

Note how in the above, we have used variable binding notation;
formally, as can be seen in Appendix~\ref{appendix:formal-defs}, this
corresponds to exponentiation by a representable functor, which is the
usual way to account for variable binding in higher algebra. To be
precise, given a presheaf $\Of{P}{\Psh{\Th}}$ and a variable context
$\Of{\Gamma}{\Th}$, by exponentiation we can construct a new presheaf
$\Of{P^{\Yoneda{\Gamma}}}{\Psh{\Th}}$ (with
$\Of{\Yoneda{-}}{\Th\to\Psh\Th}$ the Yoneda embedding) whose values
are binders that close over the variables in $\Gamma$.

Henceforth, we are justified in adopting the \emph{variable
  convention}, by which terms are identified up to renamings of their
bound variables.

\subsubsection{Proof States Monad}

We define another endofunctor on judgment structures for proof states,
$\Of{\State}{\JStr\to\JStr}$. Fixing a judgment structure
$\Of{J}{\JStr}$, there are three ways to form a $J$-proof state:
\begin{enumerate}
\item When $\Psi$ is a $J$-telescope that synthesizes sorts $\Xi$, and
  $\mathfrak{a}$ is a substitution from $\Xi$ to $\Delta$, then
  $\MkStateVerbose{\Psi}{\mathfrak{a}}{\Delta}$ is a proof state; as
  an object in a judgment structure, this proof state synthesizes
  $\Delta$. Intuitively, $\Psi$ is the collection of subgoals and
  $\mathfrak{a}$ is the \emph{validation} which constructs evidence on
  the basis of the evidence for those subgoals. We will usually write
  $\MkState{\Psi}{\mathfrak{a}}$ instead of
  $\MkStateVerbose{\Psi}{\mathfrak{a}}{\Delta}$ when it is clear from
  context.

\item For any $\Delta$, $\Fail[\Delta]$ is a proof state that
  synthesizes $\Delta$; this state represents \emph{persistent
    failure}. We will always write $\Fail$ instead of $\Fail[\Delta]$.

\item For any $\Delta$, $\bot[\Delta]$ is a proof state that
  synthesizes $\Delta$; this state represents what we call
  \emph{unsuccess}, or failure which may not be persistent. As
  above, we will always write $\bot$ instead of $\bot[\Delta]$.
\end{enumerate}

Moreover, we impose the approximation ordering
$\ApproxJdg{\Gamma}{\bot[\Delta]}{\MkStateVerbose{\Psi}{\mathfrak{a}}{\Delta}}{\State(J)}$
and $\ApproxJdg{\Gamma}{\bot[\Delta]}{\Fail[\Delta]}{\State(J)}$.

The difference between \emph{unsuccess} and \emph{failure} is closely
related to the difference between the statements \enquote{It is not
  the case that $P$ is true} and \enquote{$P$ is false} in
constructive mathematics. In the context of proof refinement in the
presence of schematic variables, it may be the case that a rule does
not apply at first, but following a substitution, it does apply;
capturing this case is the purpose of introducing the $\bot$ proof
state.

Again, the precise rules for forming proof states are given in
Figure~\ref{fig:telescopes-and-states}.

\begin{notation}
  We will write $\TConcat{\Psi}{\Psi'}$ for the concatenation of two
  telescopes, where $\Psi'$ may have variables from
  $\JProj{\Tele(J)}^\Gamma(\Psi)$ free.  Likewise, we will write
  $\IMode{\WkSt{\Psi'}{S}}$ to mean
  $\OMode{\MkState{\TConcat{\Psi'}{\Psi}}{\mathfrak{a}}}$ when
  $\Match{S}{\MkState{\Psi}{\mathfrak{a}}}$, and $\OMode{\Fail}$ when
  $\Match{S}{\Fail}$, and $\OMode{\bot}$ when $\Match{S}{\bot}$.
\end{notation}

We can instantiate a monad structure on proof states, which will abstractly
implement the composition of refinement rules, and which will play a crucial
part in the identity and sequencing tacticals from \LCF.
\[
  \begin{tikzcd}
    \IMode{\ArrId{}}
    \arrow[r, "\IMode{\eta}"]
    &
    \IMode{\State}
    &
    \IMode{\State\circ\State}
    \arrow[l, swap, "\IMode{\mu}"]
  \end{tikzcd}
\]
The unit and multiplication operators are defined by the following equations:
\begin{align*}
  \AIsEq{\StUnit{\Gamma}{X}}{
    \OMode{
      \MkState{\TCons{x}{X}{\TNil}}{x}
    }
  }
  \\
  \AIsEq{\StMul{\Gamma}{\MkState{\TNil}{\mathfrak{a}}}}{
    \OMode{
      \MkState{\TNil}{\mathfrak{a}}
    }
  }
  \\
  \AIsEq{\StMul{\Gamma}{\MkState{\TCons{x}{(\MkState{\Psi_x}{\mathfrak{a}_x})}{\Ul{\Psi}}}{\mathfrak{a}}}}{
    \OMode{
      \WkSt{\Psi_x}{
        \StMul{\Gamma,\JProj{\Tele(J)}^\Gamma(\Psi_x)}{
          \MkState{\Ul{\Psi}}{\mathfrak{a}}
        }[\mathfrak{a}_x/x]
      }
    }
  }
  \\
  \AIsEq{\StMul{\Gamma}{\MkState{\TCons{x}{\Fail}{\Ul{\Psi}}}{\mathfrak{a}}}}{
    \OMode{\Fail}
  }
  \\
  \AIsEq{\StMul{\Gamma}{\MkState{\TCons{x}{\bot}{\Ul{\Psi}}}{\mathfrak{a}}}}{
    \OMode{\bot}
  }
  \\
  \AIsEq{\StMul{\Gamma}{\Fail}}{\OMode{\Fail}}
  \\
  \AIsEq{\StMul{\Gamma}{\bot}}{\OMode{\bot}}
\end{align*}
In the interest of clear notation, we have used $\Psi$ to range over
$\Tele(J)$ and $\Ul{\Psi}$ to range over $\Tele(\State(J))$.

\begin{restatable}{thm}{CoherenceLemma}\label{thm:coherence}
  Proof states form a monad on $\JStr$, i.e.\ the following diagrams
  commute:
  \[
    \begin{tikzcd}[sep=huge, ampersand replacement = \&]
      \IMode{\State}
      \arrow[r, "\IMode{\eta}"]
      \arrow[rd, swap,"\IMode{\ArrId{}}"]
      \&
      \IMode{\State\circ\State}
      \arrow[d, "\IMode{\mu}"]
      \&
      \IMode{\State}
      \arrow[l, swap, "\IMode{\State(\eta)}"]
      \arrow[ld, "\IMode{\ArrId{}}"]
      \\
      \&
      \IMode{\State}
    \end{tikzcd}
  \]
  \[
    \begin{tikzcd}[sep=huge, ampersand replacement = \&]
      \IMode{\State\circ\State\circ\State}
      \arrow[r, "\IMode{\mu}"]
      \arrow[d, swap, "\IMode{\State(\mu)}"]
      \&
      \IMode{\State\circ\State}
      \arrow[d,"\IMode{\mu}"]
      \\
      \IMode{\State\circ\State}
      \arrow[r,"\IMode{\mu}"]
      \&
      \IMode{\State}
    \end{tikzcd}
  \]
\end{restatable}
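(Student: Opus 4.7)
My plan is to verify the three commuting diagrams pointwise in each fiber $\Gamma \in \Th$, leveraging the inductive structure of telescopes and proof states given in Figure~\ref{fig:telescopes-and-states}. Before tackling the diagrams themselves, I would first discharge the prerequisite that $\eta$ and $\mu$ qualify as morphisms in $\JStr$: each component family must extend to a monotone lax natural transformation of the underlying $\POS$-valued presheaves, and each must commute strictly with the sort-assignment $\JProj{}$. Both facts follow immediately from the uniform, syntactic shape of the defining equations, which manifestly preserves output sorts and commutes with reindexing along substitutions in $\Th$.

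For the left triangle (the $\mu \circ \eta$ unit law), the calculation is direct. Given a proof state $S = \MkState{\Psi}{\mathfrak{a}}$, the unit produces the singleton $\MkState{\TCons{x}{S}{\TNil}}{x}$; the cons clause of $\mu$ reduces this to $\WkSt{\Psi}{\MkState{\TNil}{x}[\mathfrak{a}/x]}$, which is just $S$. The $\Fail$ and $\bot$ cases are dispatched by the corresponding absorbing clauses. The right triangle (the $\mu \circ \State(\eta)$ unit law) requires induction on the telescope $\Psi$ inside $S$: in the cons case, $\State(\eta)$ has wrapped each head judgment $X$ as a singleton $\MkState{\TCons{y}{X}{\TNil}}{y}$, which $\mu$ unwraps by weakening via $\WkSt{-}{-}$ and substituting $y$ for $x$; after $\alpha$-renaming (justified by the variable convention), this recovers the original state.

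Associativity is the central calculation and the principal obstacle. I would induct on the outermost telescope of a triply-nested proof state over $\State(\State(J))$. The expected difficulty lies entirely in variable bookkeeping: both sides of the square reshuffle the nested telescopes and thread the nested validations through weakening and substitution, but via different intermediate shapes. A clean path is to first establish two auxiliary identities---a fusion law $\WkSt{\TConcat{\Psi}{\Psi'}}{S} = \WkSt{\Psi}{\WkSt{\Psi'}{S}}$, and a commutation of $\mu$ with weakening and capture-avoiding substitution. With these in hand, both sides of the square normalize to the same canonical form: a single flattened telescope whose validation is the composite of the nested $\mathfrak{a}$'s. As elsewhere, the $\Fail$ and $\bot$ absorbing clauses trivialize both sides simultaneously.

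Finally, to lift the fiberwise equalities to equality of morphisms in $\JStr$, I would appeal to the naturality and $\JProj{}$-preservation established at the outset. Monotonicity with respect to $\preccurlyeq$ is automatic in each case, since $\bot$ is absorbing for $\mu$ and matches its role as the least element of the approximation order, so every construction respects the approximation ordering on the nose rather than merely laxly.
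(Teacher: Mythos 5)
Your proposal is correct and follows essentially the same route as the paper: direct calculation for the left unit law, induction on the telescope for the right unit law, and nested induction for associativity, with your auxiliary ``commutation of $\mu$ with weakening'' identity being precisely the paper's Lemma~\ref{lem:yank-prefix}. The only detail you gloss over is that the associativity case for a cons whose head is itself a non-trivial state requires a second, inner induction on the nested telescope before the yank-prefix lemma and substitution-cancellation close the argument.
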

\begin{proof}
  By nested induction; see Appendix~\ref{appendix:proofs}.
\end{proof}

\subsection{Refinement Rules and Lax Naturality}\label{sec:refinement-rules}

We can now directly define the notions of \emph{refinement rule},
\emph{tactic} and \emph{multitactic} in terms of judgment structure
homomorphisms.

\begin{definition}[Refinement rules]\label{def:refinement-rule}
  For judgment structures $\MultiOf{J_0,J_1}{\JStr}$, a
  \emph{refinement rule} from $J_0$ to $J_1$ is a $\JStr$-homomorphism
  $\Of{\rho}{\IBox{\Define{\Rules{J_0}{J_1}}{J_0\to\State(J_1)}}}$. Unpacking
  definitions, $\rho$ is a \emph{lax} natural transformation between
  the underlying presheaves of $J$ and $\State(J_1)$ which preserves
  the projection $\JProj{}$.

  Usually, one works with homogeneous refinement rules
  $\Of{\rho}{J\to\State(J)}$, which can be called $J$-rules.
\end{definition}

The ordered character of the $\State(J)$ judgment structure is crucial
in combination with lax naturality; it is this which allows us to
define a refinement rule which neither succeeds nor fails when it
encounters a schematic variable that is blocking its applicability:
that is, it does not commit to failure under all possible
instantiations for that variable.

Full naturality would entail that a refinement rule commute with all
substitutions from $\Th$, whereas lax naturality only requires this
square to commute up to approximation---that is, for
$\Of{P,Q}{\LaxPsh{\Th}}$, $\Of{\phi}{P\to Q}$ and
$\Of{\mathfrak{a}}{\ThHom{\Gamma}{\Delta}}$, rather than the identity
$\IsEq{Q(\mathfrak{a})\circ\phi_\Delta}{\phi_\Gamma\circ
  P(\mathfrak{a})}$, we require only the approximation
$\IsApprox{Q(\mathfrak{a})\circ\phi_\Delta}{\phi_\Gamma\circ
  P(\mathfrak{a})}$.

To understand why this is desirable, let us return to the example of
the judgment form $\IsTrue{P}$; supposing that our ambient theory
$\Th$ has a sort for propositions $\Of{\SortProp}{\Th}$ and a sort for
program expressions $\Of{\SortExp}{\Th}$, we can define a judgment
structure $\Of{L}{\JStr}$ by specifying a single form of judgment:
\[
  \infer{
    \IsJdg{\Gamma}{\IBox{\IsTrue{P}}}{L}{\SortExp}
  }{
    \IsTm{\Gamma}{P}{\SortProp}
  }
\]
Then, our task is to code the inference rules of our logic, such as
the following,
\[
  \infer[\lor I_1]{
    \IsTrue{P\lor Q}
  }{
    \IsTrue{P}
  }
\]
as \emph{refinement rules} for the judgment structure $L$. Such a rule
is a judgment structure homomorphism $\Of{\land I}{L\to\State(L)}$ in
$\JStr$; at first, we might try and write the following:
\begin{align*}
  \IMode{\lor I_1^\Gamma}
  &
  \begin{RuleDefn}
    \RuleCase{\IBox{\IsTrue{P\lor Q}}}{
      \MkState{\TCons{x}{\OBox{\IsTrue{P}}}{\TNil}}{
        \mathsf{inl}(x)
      }
    }
    \\
    \RuleCase{\_}{\Fail}
  \end{RuleDefn}
  \tag{*}
\end{align*}

This, however, is not a well-formed definition, because it does not
commute in any sense with substitutions: for instance
$\IsEq{\lor I_1^{\Gamma,x:\SortProp}(\IBox{\IsTrue{x}})[P\lor
  Q/x]}{\Fail}$, whereas
$\IsEq{\lor I_1^{\Gamma}(\IBox{\IsTrue{x}}[P\lor
  Q/x])}{\MkState{\TCons{x}{\IBox{\IsTrue{P}}}{\TNil}}{\mathsf{inl}(x)}}$. However,
recall that refinement rules are subject only to lax naturality, i.e.\
naturality up to approximation; with a small adjustment to our
definition, we can make it commute with substitutions up to
approximation:
\begin{align*}
  \IMode{\lor I_1^\Gamma}
  &
  \begin{RuleDefn}
    \RuleCase{\IBox{\IsTrue{P\lor Q}}}{
      \MkState{\TCons{x}{\OBox{\IsTrue{P}}}{\TNil}}{
        \mathsf{inl}(x)
      }
    }
    \\
    \RuleCase{x}{\bot}
    \\
    \RuleCase{\_}{\Fail}
  \end{RuleDefn}
\end{align*}

Indeed, the above definition is well-formed (because we have
$\ApproxJdg{\Gamma}{\bot}{\MkState{\TCons{x}{\IBox{\IsTrue{P}}}{\TNil}}{\mathsf{inl}(x)}}{\State(L)}$). This
example reflects the difference between \emph{unsuccess} and
\emph{failure}: namely, the introduction rule above does not
\emph{yet} apply to the goal $\IsTrue{x}$, but supposing $x$ were
substituted for by some $P\lor Q$, it would then apply. On the other
hand, the rule does not apply at all to the goal $\IsTrue{P\land Q}$.

\begin{example}[Refinement rules for cost dynamics]\label{ex:example-refinement-rules}
  It will be instructive to consider a more sophisticated example.
  Resuming what we started in
  Example~\ref{ex:example-judgment-structure}, we are now equipped to
  encode formal refinement rules for the judgment structure
  $\Of{\JArith}{\JStr}$ defined in Figure~\ref{fig:example-theory}.

  We will implement the cost dynamics using two evaluation rules and one
  rule to implement the addition of numerals:
  \begin{align*}
    \AOf{
    \RuleNumEval
    }{
    \JArith\to\State(\JArith)
    }
    \\
    \AOf{
    \RulePlusEval
    }{
    \JArith\to\State(\JArith)
    }
    \\
    \AOf{
    \RuleAdd
    }{
    \JArith\to\State(\JArith)
    }
  \end{align*}

  First, let's consider what these rules would look like informally on
  paper, writing $\PrettyEval{e}{k}{n}$ for the statement that the
  judgment $\IMode{\Eval{e}}$ obtains, synthesizing cost $\OMode{k}$ and
  numeral $\OMode{n}$, and writing $\Match{m+n}{o}$ for the statement
  that the judgment $\IMode{\Add{m}{n}}$ obtains, synthesizing numeral
  $\OMode{o}$:
  \[
    \infer[\RuleNumEval]{
      \PrettyEval{\Num{n}}{0}{n}
    }{
    }
  \]
  \[
    \infer[\RulePlusEval]{
      \PrettyEval{e_1+e_2}{k}{n}
    }{
      \begin{array}{l}
        \PrettyEval{e_1}{k_1}{n_1}\\
        \PrettyEval{e_2}{k_2}{n_2}
      \end{array}
      \begin{array}{l}
        \Match{k_1+k_2}{k_{12}}\\
        \Match{\overline{1}+k_{12}}{k}
      \end{array}
      &
      \Match{n_1+n_2}{n}
    }
  \]
  \[
    \infer[\RuleAdd]{
      \Match{\overline{m}+\overline{n}}{\overline{m+n}}
    }{
    }
  \]

  In keeping with standard practice and notation, in the informal
  definition of a refinement rule, clauses for failure and unsuccess are
  elided. When we code these rules as judgment structure homomorphisms,
  we add these clauses in the appropriate places, as can be seen from
  the formal definitions of $\RuleNumEval$, $\RulePlusEval$ and
  $\RuleAdd$ in Figure~\ref{fig:example-refinement-rules}.

\end{example}

\begin{figure*}
  \begin{align*}
    \IMode{\RuleNumEval^\Gamma}
    &
    \begin{RuleDefn}
      \RuleCase{\Eval{\Num{m}}}{
        \MkState{\TNil}{[\overline{0},m]}
      }
      \\
      \RuleCase{\Eval{x}}{\bot}
      \\
      \RuleCase{\_}{\Fail}
    \end{RuleDefn}
    \\
    \IMode{\RulePlusEval^\Gamma}
    &
    \begin{RuleDefn}
      \RuleCase{\Eval{e_1+e_2}}{
        \MkState{
          \begin{array}[c][{l}]
            {[x_c,x_v]}:\Eval{e_1}.\
            {[y_c,y_v]}:\Eval{e_2}.\\
            z_c:\Add{x_c}{y_c}.\
            z_c':\Add{\overline{1}}{z_c}.\
            z_v:\Add{x_v}{y_v}
          \end{array}
        }{[z_c',z_v]}
      }
      \\
      \RuleCase{\Eval{x}}{\bot}
      \\
      \RuleCase{\_}{\Fail}
    \end{RuleDefn}
    \\
    \IMode{\RuleAdd_\Gamma}
    &
    \begin{RuleDefn}
      \RuleCase{\Add{\overline{m}}{\overline{n}}}{
        \MkState{\TNil}{\overline{m+n}}
      }
      \\
      \RuleCase{\Add{\_}{\_}}{\bot}
      \\
      \RuleCase{\_}{\Fail}
    \end{RuleDefn}
  \end{align*}
  \caption{Defining refinement rules for $\JArith$, the judgment
    structure of cost dynamics for arithmetic expressions.}
  \label{fig:example-refinement-rules}
\end{figure*}

\subsection{Combinators for Derived Refinement Rules}\label{sec:rule-combinators}

We can develop a menagerie of combinators for refinement rules which allow the
development of \emph{derived} rules. More generally, given a basis of
refinement rules, it is possible to characterize the space of \emph{derivable
rules} by the closure of this basis under certain combinators (see
Appendix~\ref{appendix:defining-refinement-logics}).

First, we will begin by defining an auxiliary judgment structure
$\Of{\Labeled{J}}{\JStr}$, which tags $J$-judgments with an index:
\[
  \framebox{$\Of{\Labeled{J}}{\JStr}$}
  \qquad
  \vcenter{
    \infer={
      \IsJdg{\Gamma}{\Tuple{X,i}}{\Labeled{J}}{\Delta}
    }{
      \IsJdg{\Gamma}{X}{J}{\Delta}
      &
      \Member{i}{\Nat}
    }
  }
\]

Next, define an operation to label the subgoals of a proof state with their index:
\begin{align*}
  \IMode{\LABEL} &:\IMode{\State(J)\to\State(\Labeled{J})}
  \\
  \IMode{\LABEL_\Gamma}
  &
  \;\begin{RuleDefn}
    \RuleCase{\MkState{\Psi}{\mathfrak{a}}}{
      \MkState{\LABEL_0^\Gamma(\Psi)}{\mathfrak{a}}
    }
    \\
    \RuleCase{\bot}{\bot}
    \\
    \RuleCase{\Fail}{\Fail}
  \end{RuleDefn}
\end{align*}
where
\begin{align*}
  \IMode{\LABEL_i} &:\IMode{\Tele(J)\to\Tele(\Labeled{J})}
  \\
  \IMode{\LABEL_i^\Gamma}
  &
  \;\begin{RuleDefn}
    \RuleCase{\TNil}{\TNil}
    \\
    \RuleCase{\TCons{x}{X}{\Psi}}{
      \TCons{x}{\Tuple{X,i}}{\LABEL_{i+1}^{\Gamma,x:\JProj{J}^\Gamma(X)}(\Psi)}
    }
  \end{RuleDefn}
\end{align*}

Let $\IMode{\vec{\rho}}$ range over a list of rules $\Of{\rho_i}{\Rules{J_0}{J_1}}$;
now we define a derived rule which applies the appropriate rule to a labeled
judgment:
\begin{align*}
  \AOf{\PROJ(\vec{\rho})}{\Rules{\Labeled{J}}{J}}
  \\
  \ADefine{
    {\PROJ(\vec{\rho})}_\Gamma\Tuple{X,i}
  }{
    \begin{cases}
        \rho_i^\Gamma(X) &{\normalcolor\textbf{if } \IMode{i}<\IMode{\vert\vec{\rho}\vert}}
        \\
        \eta_\Gamma(X) &{\normalcolor\textbf{otherwise}}
    \end{cases}
  }
\end{align*}

Now, given rules $\Of{\rho}{\Rules{J_0}{J_1}}$ and
$\Of{\vec{\rho}}{\List{\Rules{J_1}{J_2}}}$, we can define the composition
$\Of{\rho;\vec{\rho}}{\Rules{J_0}{J_2}}$ using the multiplication operator of
the $\State$ monad as follows:
\begin{align*}
  \IMode{\rho;\vec{\rho}} &: \IMode{\Rules{J_0}{J_2}}
  \\
  \ADefine{
    \rho;\vec{\rho}
  }{
    \mu
    \circ
    \State(\PROJ(\vec{\rho}))
    \circ
    \LABEL
    \circ
    \rho
  }
\end{align*}

In Appendix~\ref{appendix:defining-refinement-logics}, we develop a very
sophisticated fibered categorical notion of closed refinement logic
(\enquote{refiner}) together with a characterization of derivability relative to
refiners.

\subsection{From Refinement Rules to Tactics}

Tactics are distinguished from refinement rules in that they are not subject to
the lax naturality condition; this is because in tactic-based proof refinement,
it is necessary to support tactics which do not commute with substitutions, not
even up to approximation---for example, $\ORELSE$ and $\TRY$ will result in
completely different proof states before and after a substitution.
For this reason, we define a new category $\DiscJStr$ of discrete
judgment structures, for which tactics will be certain homomorphisms.

\begin{definition}[Discrete Judgment Structures]
  We now define the category of \emph{discrete judgment structures},
  $\Define{\DiscJStr}{\Slice{\Psh{\Dom{\Th}}}{\CtxsPsh}}$, where
  $\Dom{\Th}$ is the subcategory of $\Th$ which contains only identity
  arrows, and
  $\Define{\Psh{\Dom\Th}}{\OpCat{\Dom{\Th}}\to\SET}$.

  Then $\DiscJStr$ homomorphisms are the same as those of $\JStr$,
  except their naturality condition is trivially satisfied for any
  collection of components; this is because only identity maps occur
  $\Dom\Th$, so the naturality squares are degenerate. Furthermore,
  because the components of such homomorphisms are defined in $\SET$,
  there is no requirement of monotonicity.
\end{definition}

By composing with the inclusion $\Of{i}{\Dom{\Th}\to\Th}$ and the forgetful
functor $\Of{U}{\POS\to\SET}$, any judgment structure $\Of{J}{\JStr}$ can be
reindexed to a discrete judgment structure
$\Define{\IBox{\Of{\Dom{J}}{\DiscJStr}}}{U\circ J\circ i}$, with
$\Define{\JProj{\Dom{J}}^\Gamma}{\JProj{J}^\Gamma}$:
\[
  \begin{tikzcd}[sep=large]
    \IMode{\OpCat{\Dom{\Th}}}
    \arrow[r,"\IMode{i}"]
    \arrow[rrr,dashed,bend left=35,"\OMode{\Dom{J}}"]
    &
    \IMode{\OpCat{\Th}}
    \arrow[r,"\IMode{J}"]
    &
    \IMode{\POS}
    \arrow[r,"\IMode{U}"]
    &
    \IMode{\SET}
  \end{tikzcd}
\]

\begin{remark}

  If we intended to develop a theory of tactics which do not commute
  with substitution, why did we bother with the presheaf apparatus in
  the first place? In essence, the reason is that we are building a
  semantics which accounts for both refinement rules and
  tactics, and refinement rules are distinguished from other
  proof refinement strategies precisely by the characteristic of lax
  naturality.

  The purpose of tactics, on the other hand, is to subvert naturality in order
  to formalize modular \enquote{proof sketches} which are applicable to a broad class
  of goals. Indeed, this subversion of naturality by tactics is simultaneously
  the source of their unparalleled practicality as well as the cause of their
  notoriously brittle character.
  The uniformity of action induced by lax naturality lies in stark
  opposition to the modularity required of tactics; we will neutralize
  this contradiction by passing through discretization to $\DiscJStr$.

\end{remark}

\subsection{Tactics and Recursion}

In keeping with standard usage, a \emph{proof tactic} is a potentially
diverging program that computes a proof on the basis of some
collection of refinement rules. In order to define tactics
precisely, we will first have to specify how we intend to interpret
recursion.

A very lightweight way to interpret recursion is suggested by
Capretta's delay monad~\cite{capretta:2005} (the completely
iterative monad on the identity functor), a coinductive representation
of a process which may eventually return a value. We can define
a variation on Capretta's construction as a monad
$\Of{\infty}{\DiscJStr\to\DiscJStr}$ on discrete judgment structures,
defined as the greatest judgment structure closed under the
rules in Figure~\ref{fig:delay-monad}.

\begin{figure*}
  \begin{gather*}
    \framebox{$\Of{\infty}{\DiscJStr\to\DiscJStr}$}
    \qquad
    \vcenter{
      \infer{
        \IsJdg{\Gamma}{\Now{X}}{\infty J}{\Delta}
      }{
        \IsJdg{\Gamma}{X}{J}{\Delta}
      }
    }
    \qquad
    \vcenter{
      \infer{
        \IsJdg{\Gamma}{\LaterVerbose{\Delta}{X}}{\infty J}{\Delta}
      }{
        \IsJdg{\Gamma}{X}{\infty J}{\Delta}
      }
    }
    \tag{Delay Monad}
    \\[6pt]
    \Define{
      \Later{X}
    }{
      \LaterVerbose{\Delta}{X}
    }
    \tag{Notation}
  \end{gather*}
  \caption{Capretta's delay monad on discrete judgment structures.}
  \label{fig:delay-monad}
\end{figure*}

To summarize, for a judgment structure $\Of{J}{\DiscJStr}$, there are two
ways to construct a $\infty J$-judgment:
\begin{enumerate}
\item $\Now{X}$ is an $\infty J$-judgment when $X$ is a $J$-judgment.
\item $\Later{X}$ is an $\infty J$-judgment when $X$ is an $\infty J$-judgment.
\end{enumerate}

\begin{lem}[Delay monad]
  $\Of{\infty}{\DiscJStr\to\DiscJStr}$ forms a monad on
  $\DiscJStr$.
\end{lem}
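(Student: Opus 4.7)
The plan is to equip $\infty$ with the monad structure familiar from Capretta's delay monad, pointwise on $\DiscJStr$, and then to discharge the monad laws by coinduction, justified by the fact that $\infty J$ is defined as the \emph{greatest} judgment structure closed under $\Now$ and $\Later$ and thus carries a final-coalgebra principle for the functor $X \mapsto J + X$.

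First I would give the functorial action together with the unit and multiplication by corecursion. For a $\DiscJStr$-morphism $\phi : J \to K$, set $\infty\phi(\Now{X}) = \Now{\phi(X)}$ and $\infty\phi(\Later{X}) = \Later{\infty\phi(X)}$. The unit $\eta_J : J \to \infty J$ is $\eta_J(X) = \Now{X}$, and the multiplication $\mu_J : \infty\infty J \to \infty J$ is defined corecursively by $\mu_J(\Now{Y}) = Y$ for $Y : \infty J$ together with $\mu_J(\Later{X}) = \Later{\mu_J(X)}$. Each of these must constitute a $\DiscJStr$-morphism, which, since $\DiscJStr$ has trivially satisfied naturality and no monotonicity condition, reduces to preservation of the sort projection $\JProj{}$; this is immediate because both $\Now{X}$ and $\Later{X}$ project to the same sort context as $X$ by construction.

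Next I would verify the functor laws, the naturality of $\eta$ and $\mu$, and the three monad equations. The left unit law $\mu \circ \eta_{\infty J} = \mathrm{id}$ reduces after a single unfolding to $\mu(\Now{Y}) = Y$ and therefore holds definitionally. The remaining equations each assert that two maps $\infty J \to \infty K$ agree on every input, and are proved by coinduction: case-split the input as $\Now{X}$ or $\Later{X'}$, compute both sides by one step of unfolding, and appeal to the coinductive hypothesis on the tail. For example, the right unit law $\mu \circ \infty(\eta_J) = \mathrm{id}$ sends $\Now{X}$ to $\mu(\Now{\Now{X}}) = \Now{X}$ and sends $\Later{X'}$ to $\Later{\mu(\infty\eta_J(X'))}$, which matches $\Later{X'}$ by the coinductive hypothesis. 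Associativity $\mu \circ \mu_{\infty J} = \mu \circ \infty(\mu_J)$ is handled analogously, just with three nested layers to peel back in the $\Later$ case.

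The only genuine obstacle is stating the corecursion schemes and the coinductive equality principle precisely enough that the $\Later$ cases go through uniformly; the equational content at each step is a one-line unfolding, and I would relegate the full case analyses to the appendix alongside Theorem~\ref{thm:coherence}. A minor bookkeeping point is that the synthesised sort $\Delta$ must be threaded through each clause, but since $\Later$ preserves $\Delta$ by construction, this does not produce any substantive proof obligation.
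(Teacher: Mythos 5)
Your proposal is correct and matches the construction the paper intends: the paper states this lemma without giving a proof, deferring implicitly to Capretta's standard delay-monad structure, whose unit and multiplication you reproduce exactly and which the paper later uses as $\eta_\infty$ and $\mu_\infty$ (together with the reduction of the $\DiscJStr$-morphism conditions to preservation of the sort projection, since naturality and monotonicity are trivial there). The only point worth being careful about --- which you already flag --- is that the monad laws are equalities between elements of a coinductively defined (greatest) judgment structure, so ``appeal to the coinductive hypothesis'' must be read as exhibiting a bisimulation and invoking the coincidence of bisimilarity with equality in the final coalgebra; with that reading the argument goes through.
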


We can repeat the same construction as above to acquire a monad
$\Of{\infty}{\JStr\to\JStr}$, which adds to the previous construction
the appropriate action for substitutions. Abusing notation, we will
use the same symbol for both monads when it is clear from context what
is meant; this is justified in practice, because the assignment of
objects is the same for the two monads.

\begin{notation}
  We will write $\Of{\eta_\infty}{\ArrId{\DiscJStr}\to\infty}$ and
  $\Of{\mu_\infty}{\infty\circ\infty\to\infty}$ for the unit and
  multiplication operators respectively. We will also employ the
  following notational convention, inspired by the
  \enquote{\texttt{do}-notation} used in the Haskell programming
  language for monads:
  \[
    \Define{
      x\leftarrow M; N(x)
    }{
      \mu_\infty(\infty (x\mapsto N(x))(M))
    }
  \]
\end{notation}

\begin{definition}[Tactics and multitactics]\label{def:tactic}

  A \emph{tactic} for judgment structures $\MultiOf{J_0,J_1}{\JStr}$ is a
  $\DiscJStr$-homomorphism:
  \[
    \Define{\Tactics{J_0}{J_1}}{
      \Dom{J_0}\to\infty\Dom{\State(J_1)}
    }
  \]

  Usually one works with homogeneous tactics $\Of{\phi}{\Tactics{J}{J}}$, which
  are called $J$-tactics.
  A $J$-\emph{multitactic} is a tactic for the judgment structure $\State(J)$.
\end{definition}

\subsection{Tacticals as Tactic Combinators}
\label{sec:tacticals}

At this point we are equipped to begin defining a collection of
standard \enquote{tacticals}, or tactic combinators.


\subsubsection{Tactics from Rules}

Every rule $\Of{\rho}{J_0\to\State(J_1)}$ can be made into a tactic
$\Define{\IBox{\Of{\Now{\rho}}{\Tactics{J_0}{J_1}}}}{\eta_\infty\circ\Dom{\rho}}$.

\subsubsection{Conditional Tacticals}

To begin with, we can define the join of two tactics
$\MultiOf{\phi,\psi}{\Tactics{J_0}{J_1}}$, which implements
$\ORELSE$ from \LCF:
\begin{align*}
  \AOf{\phi\oplus\psi}{\Tactics{J_0}{J_1}}
  \\
  \ADefine{
    {(\phi\oplus\psi)}_\Gamma(X)
  }{
    \begin{array}[t]{l}
      S \leftarrow \phi_\Gamma(X);
      \\
      \begin{cases}
        \Now{S} &{\normalcolor\textbf{if } \IsEq{S}{\MkState{\Psi}{\mathfrak{a}}}}
        \\
        \psi_\Gamma(X) &{\normalcolor\textbf{otherwise}}
      \end{cases}
    \end{array}
  }
\end{align*}

In $\phi\oplus\psi$ we have an example of a natural transformation which does
\emph{not} commute with substitutions; this is fine, because tactics are
defined as discrete judgment structure homomorphisms, and are therefore
subject to only trivial naturality conditions.

In combination with the identity tactic
$\Define{\IBox{\Of{\ID}{\Tactics{J}{J}}}}{\Now{\eta}}$, we can define
the $\TRY$ tactical which replaces a failure or unsuccess with the identity:
\begin{align*}
  \AOf{
    \TRY(\phi)
  }{
    \Tactics{J}{J}
  }
  \\
  \ADefine{
    \TRY(\phi)
  }{
    \phi\oplus\ID
  }
\end{align*}

\subsubsection{Multitacticals}

We will factor the \LCF{} sequencing tacticals $\THEN$ and $\THENL$ into
a combination of a \enquote{multitactical} (a tactic that operates on
$\State(J)$ instead of $J$) and a generic sequencing operation.

These multitacticals will be factored through tacticals that are
sensitive to the position of a goal within a proof state,
namely $\CONST$ and $\PROJ$.

Let $\IMode{\phi}$ range over tactics $\IMode{\Tactics{J}{J}}$, and let
$\IMode{\vec{\phi}}$ range over a list of such tactics. We will now define some
further tactics which work over labeled judgments (following
Section~\ref{sec:rule-combinators}):
\begin{align*}
  \AOf{\CONST(\phi)}{\Tactics{\Labeled{J}}{J}}
  \\
  \ADefine{
    {\CONST(\phi)}_\Gamma\Tuple{X,i}
  }{
    \phi_\Gamma(X)
  }
\end{align*}
\begin{align*}
  \AOf{\PROJ(\vec{\phi})}{\Tactics{\Labeled{J}}{J}}
  \\
  \ADefine{
    {\PROJ(\vec{\phi})}_\Gamma\Tuple{X,i}
  }{
    \begin{cases}
        \phi_i^\Gamma(X) &{\normalcolor\textbf{if } \IMode{i}<\IMode{\vert\vec{\phi}\vert}}
        \\
        \Now{\eta_\Gamma(X)} &{\normalcolor\textbf{otherwise}}
    \end{cases}
  }
\end{align*}

Now, we need to show how to turn transform an operation on labeled
judgments into a multitactic. We will need an operation to turn
a telescope of potentially diverging computations into a potentially
diverging computation of a telescope:
\begin{align*}
  \AOf{
    \AwaitTele
  }{
    \Tele(\infty J)\to \infty\Tele(J)
  }
  \\
  \IMode{\AwaitTele_\Gamma}
  &
  \;\begin{RuleDefn}
    \RuleCase{\TNil}{\Now{\TNil}}
    \\
    \RuleCase{\TCons{x}{X^\infty}{\Psi^\infty}}{
      \!\!\begin{array}[t]{l}
        X\leftarrow X^\infty;\\
        \Psi \leftarrow \AwaitTele_{\Gamma,x:\JProj{J}^\Gamma(X)}(\Psi^\infty);\\
        \Now{\TCons{x}{X}{\Psi}}
      \end{array}
    }
  \end{RuleDefn}
\end{align*}

Using the above, we can transform $\Of{\chi}{\Tactics{\Labeled{J}}{J}}$ into a
multitactic $\Of{\StApply{\chi}}{\Tactics{\State(J)}{\State(J)}}$:
\begin{align*}
  \AOf{
    \StApply{\chi}
  }{
    \Tactics{\State(J)}{\State(J)}
  }
  \\
  \IMode{{\StApply{\chi}}_\Gamma}
  &
  \;\begin{RuleDefn}
    \RuleCase{\bot}{\Now{\bot}}
    \\
    \RuleCase{\Fail}{\Now{\Fail}}
    \\
    \RuleCase{\MkState{\Psi}{\mathfrak{a}}}{
      \!\!\begin{array}[t]{l}
        \Psi'\leftarrow \AwaitTele_\Gamma({\Tele_\Gamma(\chi)}(\LABEL_0^\Gamma(\Psi)));
        \\
        \Now{\MkState{\Psi'}{\mathfrak{a}}}
      \end{array}
    }
  \end{RuleDefn}
\end{align*}
defining the auxiliary function $\Tele_\Gamma$ as
follows:\footnote{Note that this does not follow immediately from the
  functoriality of $\Of{\Tele}{\JStr\to\JStr}$, because $\chi$ is only a
  map in $\DiscJStr$.}
\begin{align*}
  \AOf{\Tele_\Gamma}{\Tactics{J_0}{J_1}\times\Tele(J_0)(\Gamma)\to\Tele(\infty\State(J_1))(\Gamma)}
  \\
  \IMode{\Tele_\Gamma(\chi)}
  &
  \;\begin{RuleDefn}
    \RuleCase{\TNil}{\TNil}
    \\
    \RuleCase{\TCons{x}{X}{\Psi}}{
      \TCons{x}{\chi_\Gamma(X)}{
        \Tele_{\Gamma,\JProj{J}^\Gamma(X)}(\chi, \Psi)
      }
    }
  \end{RuleDefn}
\end{align*}

Finally, we can define two multitacticals: $\ALL$ which applies a single tactic
to all goals, and $\EACH$ which applies a list of tactics pointwise to the
subgoals:
\begin{align*}
  \AOf{\ALL,\EACH}{\Tactics{J}{J}\to\Tactics{\State{J}}{\State{J}}}
  \\
  \ADefine{\ALL(\phi)}{\StApply{\CONST(\phi)}}
  \\
  \ADefine{\EACH(\phi)}{\StApply{\PROJ(\phi)}}
\end{align*}

\subsubsection{Generic Sequencing}

Fixing a tactic $\Of{\phi}{\Tactics{J_0}{J_1}}$ and
$\Of{\Ul{\psi}}{\Tactics{\State(J_1)}{\State(J_2)}}$
as above, we can define the sequencing of $\Ul{\psi}$ after
$\phi$ as the following composite, also displayed in Figure~\ref{fig:seq}:
\begin{align*}
  \AOf{
    \SEQ(\phi,\Ul{\psi})
  }{
    \Tactics{J_0}{J_2}
  }
  \\
  \ADefine{
    \SEQ(\phi,\Ul{\psi})
  }{
   \infty\Dom{\mu}\circ \mu_\infty\circ\infty(\Ul{\psi})\circ\phi
  }
\end{align*}
\begin{figure*}
  \[
    \begin{tikzcd}[sep=large]
      \IMode{\Dom{J_0}}
      \arrow[r, "\OMode{\phi}"]
      \arrow[rrrr, dashed, bend left=35, "\displaystyle\IMode{\SEQ(\phi,\Ul{\psi})}"]
      &
      \IMode{\infty \Dom{\State(J_1)}}
      \arrow[r, "\OMode{\infty(\Ul{\psi})}"]
      &
      \IMode{\infty\infty\Dom{\State(\State(J_2))}}
      \arrow[r, "\OMode{\mu_\infty}"]
      &
      \IMode{\infty \Dom{\State(\State(J_2))}}
      \arrow[r, "\OMode{\infty\Dom{\mu}}"]
      &
      \IMode{\infty\Dom{\State(J_2)}}
    \end{tikzcd}
  \]
  \caption{The generic sequencing tactical displayed as a composite.}
  \label{fig:seq}
\end{figure*}
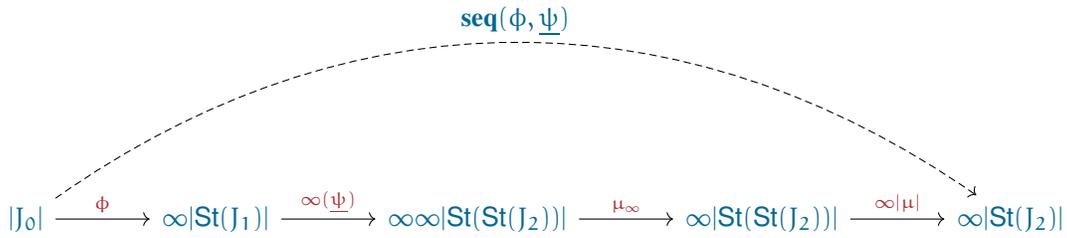

Now observe that the \LCF{} sequencing tacticals can be defined in terms
of the above combinators:
\begin{align*}
  \ADefine{
    \THEN(\phi,\psi)
  }{
    \SEQ(\phi, \ALL(\psi))
  }
  \\
  \ADefine{
    \THENL(\phi,\vec{\psi})
  }{
    \SEQ(\phi, \EACH(\vec{\psi}))
  }
\end{align*}

\subsubsection{Recursive Tacticals}

Capretta's delay monad (Figure~\ref{fig:delay-monad}) allows us to
develop a fixed point combinator for tacticals; in particular, given a
tactical $\Of{T}{\Tactics{J}{J}\to\Tactics{J}{J}}$, we have a fixed
point $\Of{\Fix(T)}{\Tactics{J}{J}}$. For the full construction of the
fixed point $\Fix(T)$, see Appendix~\ref{appendix:fixed-points}.

Using this, we can develop the standard $\REPEAT$ tactical from LCF,
which is in practice the most commonly used recursive tactical:
\begin{align*}
  \AOf{
    \REPEAT(\phi)
  }{
    \Tactics{J}{J}
  }
  \\
  \ADefine{
    \REPEAT(\phi)
  }{
    \Fix(\psi\mapsto \TRY(\THEN(\phi,\psi)))
  }
\end{align*}
Other recursive tacticals are possible, including recursive
multitacticals.

\begin{example}[Tactic for cost dynamics]
  Returning to our running example
  (Examples~\ref{ex:example-judgment-structure},
  \ref{ex:example-refinement-rules}), we can now define a useful
  tactic to discharge all $\JArith$-judgments; our first cut can be
  defined in the following way:
  \begin{align*}
    \AOf{
      \TacAutoAux
    }{
      \Tactics{\JArith}{\JArith}
    }
    \\
    \ADefine{
      \TacAutoAux
    }{
      \Now{\RuleNumEval} \oplus \Now{\RulePlusEval} \oplus \Now{\RuleAdd}
    }
    \\[6pt]
    \AOf{
      \TacAuto
    }{
      \Tactics{\JArith}{\JArith}
    }
    \\
    \ADefine{
      \TacAuto
    }{
      \REPEAT(\TacAutoAux)
    }
    \tag{*}
  \end{align*}

  This is not quite, however, what we want: the force of this tactic
  is to run all our rules repeatedly (until failure or completion) on
  each subgoal. This is fine, but because these processes are taking
  place independently on each subgoal, the instantiations induced in
  one subgoal will not propagate to an adjacent subgoal until the
  entire process has quiesced.

  The practical result of this approach is that the $\TacAuto$ tactic
  will terminate with unresolved subgoals, and must be run again; our
  intention was, however, for the tactic to discharge all subgoals
  through repetition.

  What we defined above can be described as depth-first
    repetition; what we want is breadth-first repetition, in
  which we run all the rules once on each subgoal,
  repeatedly. Then, substitutions propagate along the subgoals
  telescope with \emph{every} application of $\TacAutoAux$, instead of
  propagating only after \emph{all} applications of $\TacAutoAux$.

  The way to achieve this is to apply our repetition at the level of
  multitactics, instantiating the tactical as
  $\Of{\REPEAT}{\Tactics{\State(\JArith)}{\State(\JArith)}\to\Tactics{\State(\JArith)}{\State(\JArith)}}$
  instead of
  $\Of{\REPEAT}{\Tactics{\JArith}{\JArith}\to\Tactics{\JArith}{\JArith}}$. This
  we can accomplish as follows:
  \begin{align*}
    \AOf{
      \TacAutoMulti
    }{
      \Tactics{\State(\JArith)}{\State(\JArith)}
    }
    \\
    \ADefine{
      \TacAutoMulti
    }{
      \REPEAT(\ALL(\TacAutoAux))
    }
    \\[6pt]
    \AOf{
      \TacAuto
    }{
      \Tactics{\JArith}{\JArith}
    }
    \\
    \ADefine{
      \TacAuto
    }{
      \SEQ(\ID, \TacAutoMulti)
    }
  \end{align*}
\end{example}

\section{Concrete Implementation in \StandardML{}}

As part of the \RedPRL{} project~\cite{redprl:2016}, we have built a
practical implementation of the apparatus described above in the
\StandardML{} programming
language~\cite{milner-tofte-harper-macqueen:1997}.\footnote{Our
  implementation is open-source and available at
  \url{http://www.github.com/RedPRL/sml-dependent-lcf}.}

\RedPRL{} is an interactive proof assistant in the Nuprl tradition for
computational cubical type
theory~\cite{angiuli-harper-wilson-popl:2016}, a higher dimensional
variant of Martin-L\"of's extensional type
theory~\cite{martin-lof:1979}. Replacing \LCF{} with
\DependentLCF{} has enabled us to eliminate every last disruption to
the proof refinement process in \RedPRL{}'s refinement logic,
including the introduction rule for dependent sums (as described in
Section~\ref{sec:barbarism}), and dually, the elimination rule for
dependent products.

\DependentLCF{} has also sufficed for us as a matrix in which to
develop sophisticated type synthesis rules \`a la bidirectional
typing, which has greatly simplified the proof obligations routinely
incurred in an elaborator or refiner for extensional type theory,
without needing to develop brittle and complex tactics for this
purpose as was required in the Nuprl System.

Our experience suggests that, contrary to popularly-accepted folk
wisdom, practical and usable implementations of extensional type
theory are eminently possible, assuming that a powerful enough form of
proof refinement apparatus is adopted.

\section*{Acknowledgments}

The first author would like to thank David Christiansen for many hours of
discussion on dependent proof refinement, and Sam Tobin-Hochstadt for
graciously funding a visit to Indiana University during which the seeds for
this paper were planted. Thanks also to Arnaud Spiwack, Adrien Guatto, Danny
Gratzer, Brandon Bohrer and Darin Morrison for helpful conversations about
proof refinement.
The authors gratefully acknowledge the support of the Air Force Office of
Scientific Research through MURI grant FA9550-15-1-0053. Any opinions, findings
and conclusions or recommendations expressed in this material are those of the
authors and do not necessarily reflect the views of the AFOSR.

\bibliographystyle{IEEEtran}
\bibliography{IEEEabrv,references/refs}

\appendix

\subsection{Formal Definitions}
\label{appendix:formal-defs}

We have used a convenient syntactic notation based on variable binding
in the above. The definitions of telescopes and proof states can be
given more precisely in category-theoretic notation, at the cost of
some bureaucracy. In return, we can tell that these objects are indeed
judgment structures according to our definition purely on the basis of
how they are formed.

\begin{definition}[Yoneda embedding]
  Write $\Of{\Yoneda{\Gamma}}{\Psh{\Th}}$ for the
  representable presheaf $\OMode{\ThHom{-}{\Gamma}}$ of $\Gamma$-terms
  in the \enquote{current} context.
\end{definition}

The collection of terms-in-context can be internalized into our
presheaf category using the Yoneda embedding and the
exponential. Define the presheaf
$\Of{(\Gamma\vdash\Delta)}{\Psh{\Th}}$ as the exponential
$\OMode{{\Yoneda{\Delta}}^{\Yoneda{\Gamma}}}$; recall the definition
of the exponential in a functor category:
\begin{align*}
  \AIsEq{(\Gamma\vdash\Delta)(\Xi)}{
    {\Yoneda{\Delta}}^{\Yoneda{\Gamma}}(\Xi)
  }
  \\
  \AIsEquiv{}{
    \left[
      \Yoneda{\Xi}\times\Yoneda{\Gamma}, \Yoneda{\Delta}
    \right]
  }
  \tag{definition}
  \\
  \AIsEquiv{}{
    \left[
      \Yoneda{\Xi\times\Gamma},\Yoneda{\Delta}
    \right]
  }
  \tag{limit preservation}
  \\
  \AIsEquiv{}{
    \Yoneda{\Delta}(\Xi\times\Gamma)
  }
  \tag{Yoneda lemma}
  \\
  \AIsEquiv{}{
    \ThHom{\Xi\times\Gamma}{\Delta}
  }\tag{definition}
\end{align*}

In what follows, we will frequently exponentiate by a representable
functor; as above, naturality guarantees that the resulting object
depends in no essential way on its input, justifying a purely
syntactic notation based on variable binding.

Our definitions of telescopes and proof states as judgment structures
can be restated in categorical form below; let $\mathsf{p}$ be the evident
variable projection map in $\Th$.

\begin{align*}
  \ADefine{
    \Tele(J)
  }{
    \textstyle\mu T.\ \One + \sum_{X:J} T^{\Yoneda{\JProj{J}(X)}}
  }
  \\
  \ADefine{
    \JProj{\Tele(J)}(\mathsf{roll}(\mathsf{inl}(*)))
  }{
    \One
  }
  \\
  \ADefine{
    \JProj{\Tele(J)}(\mathsf{roll}(\mathsf{inr}(X, \Psi)))
  }{
    \JProj{J}(X) \times \JProj{\Tele(J)}(\Psi(\mathsf{p}))
  }
\end{align*}

\begin{align*}
  \ADefine{
    \State(J)
  }{
    \textstyle\sum_{\Delta:\CtxsPsh}
    \left(
    \MkSet{\Fail,\bot} +
    \sum_{\Psi:\Tele(J)}
    \JProj{\Tele(J)}(\Psi)\vdash\Delta
    \right)
  }
  \\
  \ADefine{
    \JProj{\State(J)}(\Delta,\dots)
  }{
    \Delta
  }
\end{align*}

\subsection{Proofs of Theorems}\label{appendix:proofs}

\begin{lem}\label{lem:yank-prefix}
  We have the following identity for $\Of{S}{\State(J)}$,
  $\Of{\Psi}{\Tele(J)}$, $\Of{\Ul{\Psi}}{\Tele(\State(J))}$ and
  $\Of{\mathfrak{a}}{\JProj{\Tele(\State(J))}(\TCons{x}{\WkSt{\Psi}{S}}{\Ul{\Psi}}) \vdash\Delta}$:
  \[
    \IsEq{
      \StMul{}{
        \MkState{
          \TCons{x}{
            (\WkSt{\Psi}{S})
          }{\Ul{\Psi}}
        }{\mathfrak{a}}
      }
    }{
      \WkSt{\Psi}{
        \StMul{}{
          \MkState{
            \TCons{x}{S}{\Ul{\Psi}}
          }{
            \mathfrak{a}
          }
        }
      }
    }
  \]
\end{lem}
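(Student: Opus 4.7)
The plan is to prove the identity by case analysis on $S$, following the three possible shapes of a proof state. The key observation is that both the definition of $\StMul{}{-}$ on a state of the form $\MkState{\TCons{x}{S_0}{\Ul{\Psi}}}{\mathfrak{a}}$ and the definition of $\WkSt{\Psi}{-}$ branch on whether $S_0$ (resp.\ $S$) is $\Fail$, $\bot$, or a proper telescope-plus-validation; so the two sides of the equation should match up clause by clause.

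First I would dispatch the two degenerate cases. If $\Match{S}{\Fail}$, then $\WkSt{\Psi}{S} = \Fail$, so the left-hand side reduces, by the failure clause of $\mu$, to $\Fail$; the right-hand side likewise reduces to $\WkSt{\Psi}{\Fail} = \Fail$ by the same clause followed by the failure clause of $\WkSt{}{-}$. The $\Match{S}{\bot}$ case is symmetric. These cases are essentially definitional and require no real work.

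The interesting case is $\Match{S}{\MkState{\Psi'}{\mathfrak{b}}}$, where $\WkSt{\Psi}{S} = \MkState{\TConcat{\Psi}{\Psi'}}{\mathfrak{b}}$. Expanding $\mu$ on the left-hand side using the nontrivial clause gives
\[
  \WkSt{\TConcat{\Psi}{\Psi'}}{\StMul{}{\MkState{\Ul{\Psi}}{\mathfrak{a}}}[\mathfrak{b}/x]}.
\]
Expanding $\mu$ once on the inner term of the right-hand side gives $\WkSt{\Psi'}{\StMul{}{\MkState{\Ul{\Psi}}{\mathfrak{a}}}[\mathfrak{b}/x]}$, and then applying the outer $\WkSt{\Psi}{-}$ produces $\WkSt{\Psi}{\WkSt{\Psi'}{\StMul{}{\MkState{\Ul{\Psi}}{\mathfrak{a}}}[\mathfrak{b}/x]}}$.

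The remaining obligation is then the associativity-style identity $\WkSt{\Psi}{\WkSt{\Psi'}{T}} = \WkSt{\TConcat{\Psi}{\Psi'}}{T}$, uniformly in $T$. This is itself a short case analysis on $T$: when $T$ is a proper state, it reduces to the associativity of telescope concatenation $\TConcat{(\TConcat{\Psi}{\Psi'})}{\Psi''} = \TConcat{\Psi}{(\TConcat{\Psi'}{\Psi''})}$, which in turn follows by straightforward induction on the first telescope (using the inductive definition of $\Tele$ as an iterated dependent sum over $J$, essentially the same argument as associativity of list concatenation but carried out relative to the context extensions carried by each binder). The $\Fail$ and $\bot$ cases for $T$ are immediate. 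I expect this associativity lemma to be the only genuine step; everything else is a mechanical unfolding of the defining equations of $\mu$ and of $\WkSt{}{-}$. Care must be taken only to keep track of the ambient contexts when propagating the substitution $[\mathfrak{b}/x]$, but since $\Psi$ contains no variable $x$ free, its prepending commutes with that substitution trivially.
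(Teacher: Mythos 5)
Your proposal is correct and follows essentially the same route as the paper's proof: a case split on $S$, with the $\Fail$ and $\bot$ cases dispatched definitionally and the main case reduced by unfolding $\mu$ on both sides to the identity $\WkSt{\Psi}{\WkSt{\Psi'}{T}} = \WkSt{\TConcat{\Psi}{\Psi'}}{T}$. The only difference is that you isolate that last step as an explicit associativity lemma on telescope concatenation, whereas the paper absorbs it silently into the final line of the calculation; making it explicit is a reasonable refinement, not a divergence.
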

\begin{proof}
  We proceed by case on $S$.

  \begin{proofcases}

  \item[$\Match{S}{\Fail}$] By calculation.
    \begin{align*}
      \AIsEq{
        \StMul{}{
          \MkState{
            \TCons{x}{(\WkSt{\Psi}{S})}{\Ul{\Psi}}
          }{\mathfrak{a}}
        }
      }{
        \StMul{}{
          \MkState{\TCons{x}{(\WkSt{\Psi}{\Fail})}{\Ul{\Psi}}}{\mathfrak{a}}
        }
      }
      \\
      \AIsEq{}{
        \StMul{}{\MkState{\TCons{x}{\Fail}{\Ul{\Psi}}}{\mathfrak{a}}}
      }
      \\
      \AIsEq{}{\Fail}
    \end{align*}
    \begin{align*}
      \AIsEq{
        \WkSt{\Psi}{
          \StMul{}{
            \MkState{
              \TCons{x}{S}{\Ul{\Psi}}
            }{
              \mathfrak{a}
            }
          }
        }
      }{
        \WkSt{\Psi}{
          \StMul{}{
            \MkState{
              \TCons{x}{\Fail}{\Ul{\Psi}}
            }{
              \mathfrak{a}
            }
          }
        }
      }
      \\
      \AIsEq{}{
        \WkSt{\Psi}{\Fail}
      }
      \\
      \AIsEq{}{\Fail}
    \end{align*}

  \item[$\Match{S}{\bot}$] Analogous to the above.

  \item[$\Match{S}{\MkState{\Psi_S}{\mathfrak{a}_S}}$]
    By calculation.
  \begin{align*}
    \AIsEq{
      \StMul{}{
        \MkState{
          \TCons{x}{
            (\WkSt{\Psi}{S})
          }{\Ul{\Psi}}
        }{\mathfrak{a}}
      }
    }{
      \StMul{}{
        \MkState{
          \TCons{x}{
            (\WkSt{\Psi}{(\MkState{\Psi_S}{\mathfrak{a}_S})})
          }{\Ul{\Psi}}
        }{\mathfrak{a}}
      }
    }
    \\
    \AIsEq{}{
      \StMul{}{
        \MkState{
          \TCons{x}{
            (\MkState{(\TConcat{\Psi}{\Psi_S})}{\mathfrak{a}_S})
          }{\Ul{\Psi}}
        }{\mathfrak{a}}
      }
    }
    \\
    \AIsEq{}{
      \WkSt{
        (\TConcat{\Psi}{\Psi_S})
      }{
        \StMul{}{
          \MkState{\Ul{\Psi}}{\mathfrak{a}}
        }[\mathfrak{a}_S/x]
      }
    }
  \end{align*}
  \begin{align*}
    \AIsEq{
      \WkSt{\Psi}{
        \StMul{}{
          \MkState{
            \TCons{x}{S}{\Ul{\Psi}}
          }{
            \mathfrak{a}
          }
        }
      }
    }{
      \WkSt{\Psi}{
        \StMul{}{
          \MkState{
            \TCons{x}{(\MkState{\Psi_S}{\mathfrak{a}_S})}{\Ul{\Psi}}
          }{
            \mathfrak{a}
          }
        }
      }
    }
    \\
    \AIsEq{}{
      \WkSt{\Psi}{
        (\WkSt{\Psi_S}{
          \StMul{}{
            \MkState{
              \Ul{\Psi}
            }{
              \mathfrak{a}
            }
          }[\mathfrak{a}_S/x]
        })
      }
    }
    \\
    \AIsEq{}{
    \WkSt{(\TConcat{\Psi}{\Psi_S})}{
        \StMul{}{
          \MkState{
            \Ul{\Psi}
          }{
            \mathfrak{a}
          }
        }[\mathfrak{a}_S/x]
      }
    }
  \end{align*}
  \end{proofcases}

\end{proof}

\CoherenceLemma*

\begin{proof}
  First, we have to show that the left triangle of the unit identity
  commutes for any $\Match{S}{\MkState{\Psi}{\mathfrak{a}}}$:
  \begin{align*}
    \AIsEq{
      \StMul{}{\StUnit{}{S}}
    }{
      \StMul{}{\MkState{\TCons{x}{\MkState{\Psi}{\mathfrak{a}}}{\TNil}}{x}}
    }
    \\
    \AIsEq{}{
      \MkState{\Psi}{x[\mathfrak{a}/x]}
    }
    \\
    \AIsEq{}{
      \MkState{\Psi}{\mathfrak{a}}
    }
    \\
    \AIsEq{}{
      S
    }
  \end{align*}

  Next, we must show that the right unit triangle commutes. This, we will do by induction on $\Psi$:
  \begin{proofcases}

  \item[$\Match{\Psi}{\TNil}$]
    \begin{align*}
      \AIsEq{
        \StMul{}{
          \State(\eta)(S)
        }
      }{
        \StMul{}{
          \State(\eta)(\MkState{\TNil}{\mathfrak{a}})
        }
      }
      \\
      \AIsEq{}{
        \StMul{}{
          \MkState{\TNil}{\mathfrak{a}}
        }
      }
      \\
      \AIsEq{}{
        \MkState{\TNil}{\mathfrak{a}}
      }
      \\
      \AIsEq{}{S}
    \end{align*}

  \item[$\Match{\Psi}{\TCons{x}{X}{\Psi'}}$]
    \begin{align*}
      \AIsEq{
        \StMul{}{\State(\eta)(S)}
      }{
        \StMul{}{\State(\eta)(\MkState{\TCons{x}{X}{\Psi'}}{\mathfrak{a}})}
      }
      \\
      \AIsEq{}{
        \StMul{}{
          \MkState{
            \Tele(\eta)(\TCons{x}{X}{\Psi'})
          }{\mathfrak{a}}
        }
      }
      \\
      \AIsEq{}{
        \StMul{}{
          \MkState{
            \TCons{x}{\StUnit{}{X}}{\Tele(\eta)(\Psi')}
          }{\mathfrak{a}}
        }
      }
      \\
      \AIsEq{}{
        \StMul{}{
          \MkState{
            \TCons{x}{(\MkState{\TCons{y}{X}{\TNil}}{y})}{\Tele(\eta)(\Psi')}
          }{\mathfrak{a}}
        }
      }
      \\
      \AIsEq{}{
        \WkSt{
          (\TCons{y}{X}{\TNil})
        }{
          \StMul{}{
            \MkState{\Tele(\eta)(\Psi')}{\mathfrak{a}}
          }[y/x]
        }
      }
      \\
      \AIsEq{}{
        \WkSt{
          (\TCons{x}{X}{\TNil})
        }{
          \StMul{}{
            \MkState{\Tele(\eta)(\Psi')}{\mathfrak{a}}
          }
        }
      }\tag{ren.}
      \\
      \AIsEq{}{
        \WkSt{
          (\TCons{x}{X}{\TNil})
        }{
          \StMul{}{
            \State(\eta)(\MkState{\Psi'}{\mathfrak{a}})
          }
        }
      }
      \\
      \AIsEq{}{
        \WkSt{
          (\TCons{x}{X}{\TNil})
        }{
          \MkState{\Psi'}{\mathfrak{a}}
        }
      }\tag{i.h.}
      \\
      \AIsEq{
      }{
        \MkState{\TCons{x}{X}{\Psi'}}{\mathfrak{a}}
      }
      \\
      \AIsEq{}{S}
    \end{align*}

  \end{proofcases}

  Lastly, we need to show that the monad multiplication square
  commutes. Fix $\Of{\DbUl{S}}{\State(\State(\State(J)))}$ and proceed
  by case.

  \begin{proofcases}
    \item[$\Match{\DbUl{S}}{\Fail}$]
      \begin{align*}
        \AIsEq{
          \StMul{}{\StMul{}{\DbUl{S}}}
        }{
          \StMul{}{\Fail}
        }
        \\
        \AIsEq{
          \StMul{}{\State(\mu)(\DbUl{S})}
        }{
          \StMul{}{\Fail}
        }
      \end{align*}

    \item[$\Match{\DbUl{S}}{\bot}$] Analogous to the above.

    \item[$\Match{\DbUl{S}}{\MkState{\DbUl{\Psi}}{\mathfrak{a}}}$]
      Proceed by induction on $\DbUl{\Psi}$.
  \end{proofcases}

  \begin{proofcases}

  \item[$\Match{\DbUl{\Psi}}{\TNil}$]
    \begin{align*}
      \AIsEq{
        \StMul{}{\StMul{}{\DbUl{S}}}
      }{
        \StMul{}{\StMul{}{\MkState{\TNil}{\mathfrak{a}}}}
      }
      \\
      \AIsEq{}{
        \StMul{}{\MkState{\TNil}{\mathfrak{a}}}
      }
    \end{align*}
    \begin{align*}
      \AIsEq{
        \StMul{}{\State(\mu)(\DbUl{S})}
      }{
        \StMul{}{\State(\mu)(\MkState{\TNil}{\mathfrak{a}})}
      }
      \\
      \AIsEq{}{
        \StMul{}{\MkState{\TNil}{\mathfrak{a}}}
      }
    \end{align*}

  \item[$\Match{\DbUl{\Psi}}{\TCons{x}{\Ul{S}}{\DbUl{\Psi'}}}$ with $\Match{\Ul{S}}{\Fail}$]
    \begin{align*}
      \AIsEq{
        \StMul{}{\StMul{}{\DbUl{S}}}
      }{
        \StMul{}{\StMul{}{\MkState{\TCons{x}{\Ul{S}}{\DbUl{\Psi'}}}{\mathfrak{a}}}}
      }
      \\
      \AIsEq{}{
        \StMul{}{\Fail}
      }
      \\
      \AIsEq{}{
        \Fail
      }
    \end{align*}
    \begin{align*}
      \AIsEq{
        \StMul{}{\State(\mu)(\DbUl{S})}
      }{
        \StMul{}{\State(\mu)(\MkState{\TCons{x}{\Ul{S}}{\DbUl{\Psi'}}}{\mathfrak{a}})}
      }
      \\
      \AIsEq{}{
        \StMul{}{\MkState{\TCons{x}{\Fail}{\Tele(\mu)(\DbUl{\Psi'})}}{\mathfrak{a}}}
      }
      \\
      \AIsEq{}{
        \Fail
      }
    \end{align*}

  \item[$\Match{\DbUl{\Psi}}{\TCons{x}{\Ul{S}}{\DbUl{\Psi'}}}$ with $\Match{\Ul{S}}{\bot}$]
    Analogous to the above.

  \item[$\Match{\DbUl{\Psi}}{\TCons{x}{\Ul{S}}{\DbUl{\Psi'}}}$ with $\Match{\Ul{S}}{\MkState{\Ul{\Psi}}{\mathfrak{b}}}$]
    \begin{align*}
      \AIsEq{
        \StMul{}{\StMul{}{\DbUl{S}}}
      }{
        \StMul{}{\StMul{}{\MkState{\TCons{x}{\Ul{S}}{\DbUl{\Psi'}}}{\mathfrak{a}}}}
      }
      \\
      \AIsEq{}{
        \StMul{}{\StMul{}{\MkState{\TCons{x}{(\MkState{\Ul{\Psi}}{\mathfrak{b}})}{\DbUl{\Psi'}}}{\mathfrak{a}}}}
      }
      \\
      \AIsEq{}{
        \StMul{}{
          \WkSt{\Ul{\Psi}}{
            \StMul{}{
              \MkState{\DbUl{\Psi'}}{\mathfrak{a}}
            }[\mathfrak{b}/x]
          }
        }
      }
    \end{align*}
    \begin{align*}
      \AIsEq{
        \StMul{}{\State(\mu)(\DbUl{S})}
      }{
        \StMul{}{\State(\mu)(\MkState{\TCons{x}{\Ul{S}}{\DbUl{\Psi'}}}{\mathfrak{a}})}
      }
      \\
      \AIsEq{}{
        \StMul{}{
          \MkState{
            \TCons{x}{\StMul{}{\Ul{S}}}{\Tele(\mu)(\DbUl{\Psi'})}
          }{\mathfrak{a}}
        }
      }
      \\
      \AIsEq{}{
        \StMul{}{
          \MkState{
            \TCons{x}{\StMul{}{\MkState{\Ul{\Psi}}{\mathfrak{b}}}}{\Tele(\mu)(\DbUl{\Psi'})}
          }{\mathfrak{a}}
        }
      }
    \end{align*}

    To proceed further, we must perform a second induction on $\Ul{\Psi}$.
    \begin{proofcases}

    \item[$\Match{\Ul{\Psi}}{\TNil}$]
      \begin{align*}
        \AIsEq{
          \StMul{}{\StMul{}{\DbUl{S}}}
        }{
          \StMul{}{
            \StMul{}{
              \MkState{\DbUl{\Psi'}}{\mathfrak{a}}
            }
          }[\mathfrak{b}/x]
        }
      \end{align*}
      \begin{align*}
        \AIsEq{
          \StMul{}{\State(\mu)(\DbUl{S})}
        }{
          \StMul{}{
            \MkState{
              \TCons{x}{\MkState{\TNil}{\mathfrak{b}}}{\Tele(\mu)(\DbUl{\Psi'})}
            }{\mathfrak{a}}
          }
        }
        \\
        \AIsEq{}{
          \StMul{}{\MkState{\Tele(\mu)(\DbUl{\Psi'})}{\mathfrak{a}}}[\mathfrak{b}/x]
        }
        \\
        \AIsEq{}{
          \StMul{}{\State(\mu)(\MkState{\DbUl{\Psi'}}{\mathfrak{a}})}[\mathfrak{b}/x]
        }
      \end{align*}

      By the outer inductive hypothesis, we have
      $\IsEq{\StMul{}{\StMul{}{\MkState{\DbUl{\Psi'}}{\mathfrak{a}}}}}{\StMul{}{\State(\mu)(\MkState{\DbUl{\Psi'}}{\mathfrak{a}})}}$.

      \item[$\Match{\Ul{\Psi}}{\TCons{y}{S}{\Ul{\Psi'}}}$ with $\Match{S}{\Fail}$]
        \begin{align*}
          \AIsEq{
            \StMul{}{\StMul{}{\DbUl{S}}}
          }{
            \StMul{}{
              \WkSt{
                \TCons{y}{\Fail}{\Ul{\Psi'}}
              }{
                \StMul{}{
                  \MkState{\DbUl{\Psi'}}{\mathfrak{a}}
                }[\mathfrak{b}/x]
              }
            }
          }
          \\
          \AIsEq{}{\Fail}
        \end{align*}
        \begin{align*}
          \AIsEq{
            \StMul{}{\State(\mu)(\DbUl{S})}
          }{
            \StMul{}{
              \MkState{
                \TCons{x}{\StMul{}{\MkState{\TCons{y}{\Fail}{\Ul{\Psi'}}}{\mathfrak{b}}}}{\Tele(\mu)(\DbUl{\Psi'})}
              }{\mathfrak{a}}
            }
          }
          \\
          \AIsEq{}{
            \StMul{}{
              \MkState{
                \TCons{x}{\Fail}{\Tele(\mu)(\DbUl{\Psi'})}
              }{\mathfrak{a}}
            }
          }
          \\
          \AIsEq{}{\Fail}
        \end{align*}

      \item[$\Match{\Ul{\Psi}}{\TCons{y}{S}{\Ul{\Psi'}}}$ with $\Match{S}{\bot}$]
        Analogous to the above.

      \item[$\Match{\Ul{\Psi}}{\TCons{y}{S}{\Ul{\Psi'}}}$ with $\Match{S}{\MkState{\Psi}{\mathfrak{c}}}$]
        \begin{align*}
          \AIsEq{
            \StMul{}{\StMul{}{\DbUl{S}}}
          }{
            \StMul{}{
              \WkSt{
                \TCons{y}{(\MkState{\Psi}{\mathfrak{c}})}{\Ul{\Psi'}}
              }{
                \StMul{}{
                  \MkState{\DbUl{\Psi'}}{\mathfrak{a}}
                }[\mathfrak{b}/x]
              }
            }
          }
          \\
          \AIsEq{}{
            \WkSt{\Psi}{
              \StMul{}{(
                \WkSt{
                  \Ul{\Psi'}
                }{
                  \StMul{}{\MkState{\DbUl{\Psi'}}{\mathfrak{a}}}[\mathfrak{b}/x]
                }
              )}[\mathfrak{c}/y]
            }
          }
        \end{align*}
        \begin{align*}
          \AIsEq{
            \StMul{}{\State(\mu)(\DbUl{S})}
          }{
            \StMul{}{
              \MkState{
                \TCons{x}{\StMul{}{\MkState{\TCons{y}{(\MkState{\Psi}{\mathfrak{c}})}{\Ul{\Psi'}}}{\mathfrak{b}}}}{\Tele(\mu)(\DbUl{\Psi'})}
              }{\mathfrak{a}}
            }
          }
          \\
          \AIsEq{}{
            \StMul{}{
              \MkState{
                \TCons{x}{
                  (\WkSt{\Psi}{
                    \StMul{}{\MkState{\Ul{\Psi'}}{\mathfrak{b}}}[\mathfrak{c}/y]
                  })
                }{
                  \Tele(\mu)(\DbUl{\Psi'})
                }
              }{\mathfrak{a}}
            }
          }
          \\
          \AIsEq{}{
            \WkSt{\Psi}{
              \StMul{}{
                \MkState{
                  \TCons{x}{
                     \StMul{}{\MkState{\Ul{\Psi'}}{\mathfrak{b}}}[\mathfrak{c}/y]
                  }{
                    \Tele(\mu)(\DbUl{\Psi'})
                  }
                }{\mathfrak{a}}
              }
            }
          }
          \tag{Lemma~\ref{lem:yank-prefix}}
        \end{align*}
        Now, we need to show the following:
        \[
          \begin{array}{l}
          \IMode{
            \StMul{}{(
              \WkSt{
                \Ul{\Psi'}
              }{
                \StMul{}{\MkState{\DbUl{\Psi'}}{\mathfrak{a}}}[\mathfrak{b}/x]
              }
            )}[\mathfrak{c}/y]
          }
          \\ \qquad =
          \IMode{
            \StMul{}{
              \MkState{
                \TCons{x}{
                   \StMul{}{\MkState{\Ul{\Psi'}}{\mathfrak{b}}}[\mathfrak{c}/y]
                }{
                  \Tele(\mu)(\DbUl{\Psi'})
                }
              }{\mathfrak{a}}
            }
          }
          \end{array}
        \]

        It suffices to show:
        \[
          \begin{array}{l}
          \IMode{
            \StMul{}{(
              \WkSt{
                \Ul{\Psi'}
              }{
                \StMul{}{\MkState{\DbUl{\Psi'}}{\mathfrak{a}}}[\mathfrak{b}/x]
              }
            )}[\mathfrak{c}/y]
          }
          \\ \qquad =
          \IMode{
            \StMul{}{
              \MkState{
                \TCons{x}{
                   \StMul{}{\MkState{\Ul{\Psi'}}{\mathfrak{b}}}
                }{
                  \Tele(\mu)(\DbUl{\Psi'})
                }
              }{\mathfrak{a}}
            }[\mathfrak{c}/y]
          }
          \end{array}
        \]
        Cancelling substitutions, we need only show the following:
        \[
          \begin{array}{l}
          \IMode{
            \StMul{}{(
              \WkSt{
                \Ul{\Psi'}
              }{
                \StMul{}{\MkState{\DbUl{\Psi'}}{\mathfrak{a}}}[\mathfrak{b}/x]
              }
            )}
          }\\ \qquad =
          \IMode{
            \StMul{}{
              \MkState{
                \TCons{x}{
                   \StMul{}{\MkState{\Ul{\Psi'}}{\mathfrak{b}}}
                }{
                  \Tele(\mu)(\DbUl{\Psi'})
                }
              }{\mathfrak{a}}
            }
          }
          \end{array}
        \]
        But this holds by our inner inductive hypothesis.
    \end{proofcases}
  \end{proofcases}

\end{proof}

\subsection{Fixed Points of Tacticals}
\label{appendix:fixed-points}

We will now demonstrate how to take the fixed point of a tactical
using Capretta's delay monad (Figure~\ref{fig:delay-monad}), and use
it to construct the commonly-used $\REPEAT$ tactical from \LCF{}. To
begin with, we will need to define fiberwise products and
$\omega$-sequences as operations on judgment structures. The fiberwise
product of two judgment structures is a judgment structure whose
judgments consist in pairs of judgments that synthesize the same sort
$\Delta$:

\[
  \framebox{$\Of{\otimes}{\DiscJStr\times\DiscJStr\to\DiscJStr}$}
  \qquad
  \vcenter{
    \infer{
      \IsJdg{\Gamma}{\Tuple{X,Y}}{J_0\otimes J_1}{\Delta}
    }{
      \IsJdg{\Gamma}{X}{J_0}{\Delta}
      &
      \IsJdg{\Gamma}{Y}{J_1}{\Delta}
    }
  }
  \tag{Fiberwise Product}
\]

This is just the pullback of the two judgment structures when viewed
as objects in the slice category $\Slice{\LaxPsh{\Dom{\Th}}}{\CtxsPsh}$:
\[
  \begin{tikzcd}[sep=large]
    \OMode{J_0\otimes J_1}
    \arrow[r]
    \arrow[d]
    \arrow[dr,phantom, "\lrcorner", very near start]
    &
    \IMode{J_0}
    \arrow[d, "\IMode{\JProj{J_0}}"]
    \\
    \IMode{J_1}
    \arrow[r, "\IMode{\JProj{J_1}}"]
    &
    \IMode{\CtxsPsh}
  \end{tikzcd}
\]

Next, we define an endofunctor on judgment structures that takes
infinite sequences of judgments:
\[
  \framebox{$\Of{-^\omega}{\DiscJStr\to\DiscJStr}$}
  \qquad
  \vcenter{
    \infer{
      \IsJdg{\Gamma}{\Tuple{X_n\mid n}}{J^\omega}{\Delta}
    }{
      \IsJdg{\Gamma}{X}{J_n}{\Delta}\ (\Member{n}{\Nat})
    }
  }
  \tag{$\omega$-Sequence}
\]
This too can be presented as the limit
$\OMode{\varprojlim_{n\in\Nat} J^n}$ where $J^n$ is the
$n$-fold fiberwise product of $J$ with itself.

Now that we have defined the objects we require, we will begin to
define the operations by which we can take the fixed point of a
tactical, following~\cite{capretta:2005}. First, we have a way to
\enquote{race} two delayed judgments, returning the one which resolves
first:
\begin{align*}
  \AOf{
    \Race
  }{
    \infty J \otimes \infty J \to \infty J
  }
  \\
  \IMode{\Race_\Gamma}
  &
  \;\begin{RuleDefn}
    \RuleCase{\Tuple{\Now{X},Y_\infty}}{
      \Now{X}
    }
    \\
    \RuleCase{\Tuple{\Later{X_\infty},\Now{Y}}}{
      \Now{Y}
    }
    \\
    \RuleCase{\Tuple{\Later{X_\infty},\Later{Y_\infty}}}{
      \Race_\Gamma\Tuple{X_\infty,Y_\infty}
    }
  \end{RuleDefn}
\end{align*}
This construction can be lifted to an $\omega$-sequence of judgments
as follows:
\begin{align*}
  \AOf{
    \Search{n}
  }{
    {(\infty J)}^\omega \otimes \infty J \to \infty J
  }
  \\
  \IMode{\Search{n}^\Gamma}
  &
  \begin{RuleDefn}
    \RuleCase{\Tuple{F,\Now{X}}}{\Now{X}}
    \\
    \RuleCase{\Tuple{F,\Later{X_\infty}}}{
      \Later{\Search{n+1}^\Gamma\Tuple{F, \Race_\Gamma\Tuple{X_\infty, F_n}}}
    }
  \end{RuleDefn}
\end{align*}

In the delay monad, it is possible to define an object which never
resolves:
\begin{align*}
  \AOf{\Never_\Gamma}{\infty J(\Gamma)}
  \\
  \ADefEq{
    \Never_\Gamma
  }{
    \Later{\Never_\Gamma}
  }
\end{align*}

Now, using this and our unbounded search operator, we can take the
least upper bound of an $\omega$-sequence of judgments:
\begin{align*}
  \AOf{
    \sqcup
  }{
    {(\infty J)}^\omega \to \infty J
  }
  \\
  \ADefine{
    \sqcup_\Gamma(F)
  }{
    \Search{0}^\Gamma\Tuple{F,\Never_\Gamma}
  }
\end{align*}

Now fix a tactical $\Of{T}{\Tactics{J}{J}\to\Tactics{J}{J}}$; it is
now easy to get the fixed point of $T$ (if it exists) by taking the
least upper bound of a sequence of increasingly many applications of
$T$ to itself:
\begin{align*}
  \AOf{
    \Fix(T)
  }{
    \Tactics{J}{J}
  }
  \\
  \ADefine{
    {\Fix(T)}_\Gamma(X)
  }{
    \sqcup\Tuple{T^n_\Gamma(X) \mid n}
  }
\end{align*}
where
\begin{align*}
  \AOf{
    T^n
  }{
    \Tactics{J}{J}
  }
  \\
  \ADefEq{
    T^0_\Gamma
  }{
    X\mapsto \Never_\Gamma
  }
  \\
  \ADefEq{
    T^{n+1}_\Gamma
  }{
    T_\Gamma(T^n_\Gamma)
  }
\end{align*}

\subsection{Defining Refinement Logics}\label{appendix:defining-refinement-logics}

So far, we have built up a sophisticated apparatus for deterministic
refinement proof, but have not shown how to instantiate it to a
closed logic.  In what follows, we will define a category of
\emph{refiners}, which can be thought of as implementations of a
logic.

\begin{definition}[Heterogeneous refiners]
  A \emph{heterogeneous refiner} for judgment structures
  $\Of{J_0}{\OpCat{\JStr}}$ and $\Of{J_1}{\JStr}$ is a signature of
  rule names $\Of{\Sigma}{\POS}$ equipped with an interpretation
  $\Of{\mathcal{R}}{\Sigma\to\Rules{J_0}{J_1}}$. More
  generally, the \emph{category of heterogeneous refiners}
  $\IMode{\HRefiners(J_0,J_1)}$ is the lax slice category
  $\OMode{\LaxSlice{\POS}{\Rules{J_0}{J_1}}}$.

  That is to say, a \emph{refiner homomorphism} is a renaming of rules
  which preserves behavior up to approximation:
  \[
    \begin{tikzcd}[sep=huge]
      \IMode{\Sigma_0}
      \arrow[r, "\IMode{\phi}"]
      \arrow[dr, swap, "\IMode{\mathcal{R}_0}"]
      \arrow[dr, phantom, bend left=25,"\preccurlyeq"]
      &
      \IMode{\Sigma_1}
      \arrow[d, "\IMode{\mathcal{R}_1}"]
      \\
      &\IMode{\Rules{J_0}{J_1}}
    \end{tikzcd}
  \]

\end{definition}

Because this definition gives rise to a functor
$\Of{\HRefiners}{\OpCat{\JStr}\times\JStr\to\CAT}$, via the
Grothendieck construction we can view refiners in general as forming a
fibered category
$\MkFibration{\HRefiners}{p}{\JStr\times\OpCat{\JStr}}$, defining
$\Define{\HRefiners}{\oint^{\JStr\times\OpCat{\JStr}}\HRefiners}$. The
fibrational version has the advantage of specifying the
notion of a refiner without fixing a particular judgment structure,
thereby enabling a direct characterization of homomorphisms between
refiners over different judgment structures.

We will usually restrict our attention to refiners that
transform goals into subgoals of the same judgment structure;
therefore, we must define a notion of \emph{homogeneous refiner}.

First, let $\Core{\JStr}$ be the \emph{groupoid core} of $\JStr$,
i.e.\ the largest subcategory of $\JStr$ whose arrows are all
isomorphisms; we have the evident diagonal functor
$\Of{\delta}{\Core{\JStr}\to\JStr\times\OpCat{\JStr}}$.
Then, the category $\Refiners$ of \emph{homogeneous refiners} is
easily described as the following pullback of the refiner fibration
along the diagonal:
\[
  \begin{tikzcd}[sep=huge]
    \OMode{\Refiners}
    \arrow[r]
    \arrow[d, swap,"\OMode{\Core{p}}"]
    \arrow[dr,phantom, "\lrcorner", very near start]
    &
    \IMode{\HRefiners}
    \arrow[d, "\IMode{p}"]
    \\
    \IMode{\Core{\JStr}}
    \arrow[r, swap, "\IMode{\delta}"]
    &
    \IMode{\JStr\times\OpCat{\JStr}}
  \end{tikzcd}
\]

Because it is the pullback of a fibration,
$\MkFibration{\Refiners}{\Core{p}}{\Core{\JStr}}$ is also a fibered
category.

\newcommand\Leaf[1]{\mathsf{leaf}(#1)}
\newcommand\Branch[2]{\mathsf{branch}(#1;\;#2)}

\begin{definition}[Derivability Closure]

  We can define the derivability closure of a homogeneous refiner
  $\Of{\IBox{\Match{\mathfrak{R}}{\Tuple{J,\Sigma,\mathcal{R}}}}}{\Refiners}$.
  First define a new rule signature which contains all derived rules, and
  extend the interpreter $\mathcal{R}$ appropriately:
  \[
    \infer{
      \Member{\Leaf{r}}{\Sigma^\star}
    }{
      \Member{r}{\Sigma}
    }
    \qquad
    \infer{
      \Member{\Branch{r}{\vec{r}}}{\Sigma^\star}
    }{
      \Member{r}{\Sigma^\star}
      &
      \Member{\vec{r}}{\List{\Sigma^\star}}
    }
  \]
  \[
    \infer{
      \IsApprox{\Leaf{r_0}}{\Leaf{r_1}}
    }{
      \IsApprox{r_0}{r_1}
    }
    \qquad
    \infer{
      \IsApprox{\Branch{r_0}{\vec{r}_0}}{\Branch{r_1}{\vec{r}_1}}
    }{
      \IsApprox{r_0}{r_1}
      &
      \IsApprox{\vec{r}_0}{\vec{r}_1}
    }
  \]
  \begin{align*}
    \IMode{\mathcal{R}^\star} &: \IMode{\Sigma^\star\to\Rules{J}{J}}
    \\
    \IMode{\mathcal{R}^\star}
    &
    \;\begin{RuleDefn}
      \RuleCase{\Leaf{r}}{
        \mathcal{R}(r)
      }
      \\
      \RuleCase{\Branch{r}{\vec{r}}}{
        \mathcal{R}^*(r);\; \List{\mathcal{R}^*}(\vec{r})
      }
    \end{RuleDefn}
  \end{align*}

  Then the derivability closure of $\IMode{\mathfrak{R}}$ is
  $\OBox{\Define{\mathfrak{R}^\star}{\Tuple{J,\Sigma^\star,\mathcal{R}^\star}}}$.
\end{definition}

\begin{thm}
  We have a refiner homomorphism
  $\Of{i}{\mathfrak{R}\to\mathfrak{R}^\star}$.
\end{thm}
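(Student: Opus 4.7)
The plan is to exhibit $i$ as the canonical leaf-embedding $r\mapsto\Leaf{r}$, which injects each primitive rule name into the derivability closure of the signature. Because $\mathfrak{R}$ and $\mathfrak{R}^\star$ share the same judgment structure $J$, the arrow $i$ will live as a vertical morphism in the fiber $\HRefiners(J,J)$ over $\Core{\JStr}$, so no reindexing data along the fibration $\Core{p}$ needs to be supplied.

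Two conditions must then be verified. First, I would check that $\Leaf{-}$ is monotone as a map $\Sigma\to\Sigma^\star$ in $\POS$: this follows immediately from the first of the two rules defining the approximation order on $\Sigma^\star$, namely $\IsApprox{r_0}{r_1}$ entails $\IsApprox{\Leaf{r_0}}{\Leaf{r_1}}$. Second, I would verify the lax-commuting triangle required of a refiner homomorphism, $\IsApprox{\mathcal{R}}{\mathcal{R}^\star\circ i}$. In fact something stronger holds: the first clause of the recursive definition of $\mathcal{R}^\star$ gives $\IsEq{\mathcal{R}^\star(\Leaf{r})}{\mathcal{R}(r)}$ on the nose for every $\Member{r}{\Sigma}$, so the triangle commutes strictly and the lax condition is a fortiori satisfied.

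I do not anticipate any real obstacle; the content of the theorem is essentially a sanity check that the derivability closure $\mathfrak{R}^\star$ is an extension, rather than a modification, of $\mathfrak{R}$. The only care required is the notational bookkeeping to witness that the judgment-structure component is the identity on $J$, so that the data of a morphism in the fibered category $\Refiners$ reduces entirely to the vertical component $\Of{\Leaf{-}}{\Sigma\to\Sigma^\star}$ together with the (trivially satisfied) approximation of interpretations.
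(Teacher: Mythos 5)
Your proposal is correct and matches the paper's own proof: both define $i$ on rule names by $r\mapsto\mathsf{leaf}(r)$, observe monotonicity from the order on the closed signature, and note that the interpretation triangle commutes strictly (hence a fortiori laxly) because $\mathcal{R}^\star(\mathsf{leaf}(r))=\mathcal{R}(r)$ by definition. Your additional remark that the judgment-structure component is the identity on $J$, so the morphism is vertical in the fiber, is a welcome bit of explicitness the paper leaves implicit.
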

\begin{proof}

  Define the action on rule names as $\Define{i(r)}{\Leaf{r}}$; this is clearly
  monotone. It suffices to show that the following diagram commutes:
  \[
    \begin{tikzcd}[sep=large]
      \IMode{\Sigma}
      \arrow[r, "\IMode{i}"]
      \arrow[dr, swap, "\IMode{\mathcal{R}}"]
      &
      \IMode{\Sigma^\star}
      \arrow[d, "\IMode{\mathcal{R}^\star}"]
      \\
      &\IMode{\Rules{J}{J}}
    \end{tikzcd}
    \qquad
    \begin{tikzcd}[sep=large]
      \IMode{r}
      \arrow[r,mapsto,"\IMode{i}"]
      \arrow[dr,swap,mapsto, "\IMode{\mathcal{R}}"]
      &
      \OMode{\Leaf{r}}
      \arrow[d,mapsto,"\IMode{\mathcal{R}^\star}"]
      \\
      &\OBox{\IsEq{\mathcal{R}(r)}{\mathcal{R}(r)}}
    \end{tikzcd}
  \]

\end{proof}

\begin{definition}[Presentation]\label{def:presentation}

  A \emph{presentation} of a refiner
  $\Of{\IBox{\Define{\mathfrak{R}}{\Tuple{J,\Sigma,\mathcal{R}}}}}{\Refiners}$
  is another refiner
  $\Define{\mathfrak{R}_p}{\Tuple{J_p,\Sigma_p,\mathcal{R}_p}}$ together with a
  refiner homomorphism $\Of{p}{\mathfrak{R}_p\to\mathfrak{R}}$ such that the
  induced judgment structure homomorphism $\Of{p_0}{J_p\to J}$ is equipped with
  a section; that is, we have the following:
  \[
    \begin{tikzcd}[sep=huge]
      \IMode{J}
      \arrow[r,dashed,"\OMode{s}"]
      \arrow[dr,swap,"\IMode{\ArrId{J}}"]
      &
      \IMode{J_p}
      \arrow[d,"\IMode{p_0}"]
      \\
      &
      \IMode{J}
    \end{tikzcd}
  \]

\end{definition}

\begin{definition}[Category of Presentations]

  We can capture Definition~\ref{def:presentation} in a \emph{category of
  presentations} as the following pullback situation, letting
  $\Define{\mathfrak{R}}{\Tuple{J,\Sigma,\mathcal{R}}}$ and $\SplitEpis{\JStr}$
  be the category of split epimorphisms in $\JStr$:
  \[
    \begin{tikzcd}[sep=huge]
      \OMode{\Pres{\mathfrak{R}}}
      \arrow[r,dashed]
      \arrow[d,dashed]
      &
      \IMode{\Slice{\Refiners}{\mathfrak{R}}}
      \arrow[d,"\IMode{\pi_1}"]
      \\
      \IMode{\Slice{\SplitEpis{\JStr}}{J}}
      \arrow[r,swap,rightarrowtail,"\IMode{i}"]
      &
      \IMode{\Slice{\JStr}{J}}
    \end{tikzcd}
  \]
\end{definition}

\begin{definition}[Canonizing Presentation]
  A presentation $\Of{p}{\mathfrak{R}_p\to\mathfrak{R}}$ is
  called \emph{canonizing} when any construction that can be effected in
  $\mathfrak{R}$ can be effected by a unique rule in $\mathfrak{R}_p$. That is
  to say, for any $\IsJdg{\Gamma}{X}{J}{\Delta}$ and
  $\Of{\mathfrak{a}}{\ThHom{\Gamma}{\Delta}}$, we have:
  \[
    (\exists \Of{r}{\Sigma}.\ \Match{\mathcal{R}(r)(X)}{\MkState{\TNil}{\mathfrak{a}}})
    \Rightarrow
    (\exists! \Of{r}{\Sigma_p}.\ \Match{\mathcal{R}_p(r)(p^{-1}(X))}{\MkState{\TNil}{\mathfrak{a}}})
  \]
\end{definition}

\subsection{Tactic Scripts and their Dynamics}

Fixing a refiner
$\Of{\IBox{\Match{\mathfrak{R}}{\Tuple{J,\Sigma,\mathcal{R}}}}}{\Refiners}$,
we can now define a language of tactic scripts for $\mathfrak{R}$, letting $r$ range over $\Sigma$:
\[
  \begin{array}{lcl}
    \IMode{t} &::= &\OMode{r} \mid \OMode{1} \mid \OMode{t\oplus t} \mid \OMode{t\star} \mid \OMode{t; m}
    \\
    \IMode{m} &::= &\OMode{\Box{t}} \mid \OMode{[t,\dots,t]} \mid \OMode{m\star}
  \end{array}
\]

In the above, the classic $\IMode{\ORELSE}$ tactical from LCF is
implemented by $\IMode{t_1\oplus t_2}$; as in
Section~\ref{sec:tacticals} we have decomposed the standard LCF
tacticals $\IMode{\THEN}$ and $\IMode{\THENL}$ into a combination of
multitacticals and the sequencing tactical: respectively
$\OMode{t_0;\Box t_1}$ and $\OMode{t;[t_0,\dots,t_n]}$.
\begin{align*}
  \AOf{
    \TacInterp{t}
  }{
    \Tactics{J}{J}
  }
  \\
  \AOf{
    \MTacInterp{m}
  }{
    \Tactics{\State(J)}{\State(J)}
  }
\end{align*}
\begin{align*}
  \ADefEq{
    \TacInterp{r}
  }{
    \Now{\mathcal{R}(r)}
  }
  \\
  \ADefEq{
    \TacInterp{1}
  }{
    \ID
  }
  \\
  \ADefEq{
    \TacInterp{t_1\oplus t_2}
  }{
    \TacInterp{t_1}\oplus\TacInterp{t_2}
  }
  \\
  \ADefEq{
    \TacInterp{t\star}
  }{
    \REPEAT(\TacInterp{t})
  }
  \\
  \ADefEq{
    \TacInterp{t; m}
  }{
    \SEQ(\TacInterp{t},\TacInterp{m})
  }
\end{align*}
\begin{align*}
  \ADefEq{
    \MTacInterp{\Box t}
  }{
    \ALL(\TacInterp{t})
  }
  \\
  \ADefEq{
    \MTacInterp{[t_0,\dots,t_n]}
  }{
    \EACH(\Tuple{\TacInterp{t_0},\dots,\TacInterp{t_n}})
  }
  \\
  \ADefEq{
    \MTacInterp{m\star}
  }{
    \REPEAT(\MTacInterp{m})
  }
\end{align*}


\end{document}